\newif\iflncs
\newif\iflipics
	\spnewtheorem{observation}[lemma]{Observation}{\bfseries}{\itshape}
	\newtheorem{definition}[theorem]{Definition}
		\theoremstyle{plain}
		\newtheorem{observation}[theorem]{Observation}
		\newtheorem{observation}{Observation}
		\newtheorem{theorem}{Theorem}
		\newtheorem{lemma}{Lemma}
		\newtheorem{corollary}[lemma]{Corollary}
		\newtheorem{definition}[theorem]{Definition}
\def\LongVersion{}
\def\LongVersionEnd{}
\long\def\ShortVersion#1\ShortVersionEnd{}
\def\ShortVersion{}
\def\ShortVersionEnd{}
\long\def\LongVersion#1\LongVersionEnd{}
\renewcommand{\paragraph}[1]{\par\noindent\textbf{#1}}
\newcommand{\Sect}{Sec.}
\newcommand\ftime{\ensuremath{t^{f}\@ifnextchar({{}^{\!}}{}}\xspace}
\newenvironment{IntuitionSpotlight}[0]
{\par%
\setlength{\leftskip}{0.5\parindent}%
\setlength{\rightskip}{0.5\parindent}%
\noindent\itshape\textbf{Intuition spotlight:}}
{\par\ignorespacesafterend} 
\newif\ifShortVersion
	\NewDocumentEnvironment{proof*}{mo}{%
		\@nameuse{collect}{Proofs}{%
			\begin{proof}[#1]
		}{
			\end{proof}
		}
	}{%
		\@nameuse{endcollect}%
	}
	\NewDocumentEnvironment{proof*}{mo}{%
		\IfValueTF{#2}{\begin{proof}[#2]}{\begin{proof}}
	}{%
		\end{proof}
	}
	\renewcommand\paragraph[1]{\vspace{2pt}\noindent \textbf{#1}}
\def\lastabbrevdot{\@ifnextchar.{}{\@ifnextchar,{.}{.\ }}}
\def\whp{w.h.p\lastabbrevdot}
\title{Dynamic Networks of Finite State Machines\footnote{This work was partially supported by ERC Grant No. 336495 (ACDC). Licensed under CC-BY-NC-ND 4.0}}
	\author{Yuval Emek \and Jara Uitto}
	\institute{ETH Z\"urich, Switzerland}
	\date{}
	\author[1]{Yuval Emek}
	\author[2,3]{Jara Uitto}
	\affil[1]{Technion, Israel}
	\affil[2]{ETH Z\"urich, Switzerland}
	\affil[3]{University of Freiburg, Germany}
		\authorrunning{Y. Emek and J. Uitto}
		\subjclass{F.1.1 Models of Computation}
		\keywords{Networked finite automata. Dynamic networks.}
\newcommand{\bigO}{\mathcal{O}}
\newcommand{\Expectation}[0]{\mathbb{E}}
\newcommand{\Probability}[0]{\mathbb{P}}
\newcommand{\Yuvals}{\texttt{Proportional}}
\newcommand{\greedy}{\texttt{Greedy}}
\newcommand{\nbh}{\Gamma}
\newcommand{\hnbh}{\hat \Gamma}
\newcommand{\pkaks}{\mathcal{E}}
\begin{document}

\iflipics
\else
 \begin{titlepage}
 
\fi

\maketitle
\iflipics
\else
 \thispagestyle{empty}
\fi
\begin{abstract}
Like distributed systems, biological multicellular processes are subject to
dynamic changes and a biological system will not pass the
survival-of-the-fittest test unless it exhibits certain features that enable
fast recovery from these changes.
In most cases, the types of dynamic changes a biological process may
experience and its desired recovery features differ from those traditionally
studied in the distributed computing literature.
In particular, a question seldomly asked in the context of distributed digital
systems and that is crucial in the context of biological cellular networks, is
whether the system can keep the changing components \emph{confined} so that
only nodes in their vicinity may be affected by the changes, but nodes
sufficiently far away from any changing component remain unaffected.

Based on this notion of confinement, we propose a new metric for measuring the
dynamic changes recovery performance in distributed network algorithms
operating under the \emph{Stone Age} model (Emek \& Wattenhofer, PODC 2013),
where the class of dynamic topology changes we consider includes
inserting/deleting an edge, deleting a node together with its incident edges,
and inserting a new isolated node.
Our main technical contribution is a distributed algorithm for maximal
independent set (MIS) in synchronous networks subject to these
topology changes that performs well in terms of the aforementioned new metric.
Specifically, our algorithm guarantees that nodes which do not experience a
topology change in their immediate vicinity are not affected and that
all surviving nodes (including the affected ones) perform
$\bigO((C + 1) \log^{2} n)$
computationally-meaningful steps, where $C$ is the number of topology
changes;
in other words, each surviving node performs
$\bigO(\log^{2} n)$
steps when amortized over the number of topology changes.
This is accompanied by a simple example demonstrating that the linear
dependency on $C$ cannot be avoided.
\end{abstract}
 \iflipics
 \else


	
	\end{titlepage}
 \fi

\section{Introduction}
\label{section:introduction}

The biological form of close-range (juxtacrine) message passing relies on
designated messenger molecules that bind to crossmembrane receptors in
neighboring cells;
this binding action triggers a signaling cascade that eventually affects gene
expression, thus modifying the neighboring cells' states.
This mechanism should feel familiar to members of the distributed computing
community as it resembles the message passing schemes of distributed digital
systems.
In contrast to nodes in distributed digital systems, however, biological cells
are not believed to be Turing complete, rather each biological cell is pretty
limited in computation as well as communication.
In attempt to cope with these differences, Emek and
Wattenhofer~\cite{Emek2013} introduced the \emph{Stone Age} model of
distributed computing (a.k.a.\ networked finite state machines), where each
node in the network is a very weak computational unit with limited
communication capabilities and showed that several fundamental distributed
computing problems can be solved efficiently under this model.

An important topic left outside the scope of \cite{Emek2013} is that of
\emph{dynamic topology changes}.
Just like distributed digital systems, biological systems may experience
local changes and the ability of the system to recover from these
changes is crucial to its survival.
However, the desired recovery features in biological cellular networks
typically differ from those traditionally studied in the distributed computing
literature.
In particular, a major issue in the context of biological cellular networks,
that is rarely addressed in the study of distributed digital systems, is that
of \emph{confining} the topology changes:
while nodes in the immediate vicinity of a topology change are doomed to be
affected by it (hopefully, to a bounded extent), isolating the nodes
sufficiently far away from any topology change so that their operation remains
unaffected, is often critical.
Indeed, a biological multicellular system with limited energy resources cannot
afford every cell division (or death) to have far reaching effects on the
cellular network.


In this paper, we make a step towards bringing the models from computer
science closer to biology by extending the Stone Age model to accommodate four
types of dynamic topology changes:
(1) deleting an existing edge;
(2) inserting a new edge;
(3) deleting an existing node together with its incident edges; and
(4) inserting a new isolated node.
We also introduce a new method for measuring the performance of a network in
recovering from these types of topology changes that takes into account the
aforementioned confinement property.
This new method measures the number of ``computationally-meaningful''
steps made by the individual nodes, which are essentially all steps in which
the node participates (in the weakest possible sense) in the global
computational process.
An algorithm is said to be \emph{effectively confining} if
(i) the runtime of the nodes that are not adjacent to any topology change is
$\log^{\bigO (1)} n$; and
(ii) the global runtime (including all surviving nodes) is
$(C + 1) \log^{\bigO (1)} n$,
where $C$ is the number of topology changes throughout the execution.
In other words, the global runtime is
$\log^{\bigO (1)} n$
when amortized over the number of changes. 

Following that, we turn our attention to the extensively studied \emph{maximal
independent set (MIS)} problem and design a randomized effectively confining
algorithm for it under the Stone Age model extended to dynamic topology
changes.
This is achieved by carefully augmenting the MIS algorithm introduced in
\cite{Emek2013} with new components, tailored to ensure fast recovery from
topology changes.
Being a first step in the study of recovery from dynamic changes under the
Stone Age model, our algorithm assumes a \emph{synchronous} environment and it
remains an open question whether this assumption can be lifted.
Nevertheless, this assumption is justified by the findings of Fisher et
al.~\cite{FisherHMP2008} that model cellular networks as being subject to a
\emph{bounded asynchrony} scheduler, which is equivalent to a synchronous
environment from an algorithmic perspective.

\paragraph{Paper's Organization}
An extension of the Stone Age model of~\cite{Emek2013} to dynamic topology
changes is presented in \Sect{}~\ref{section:model} together with our new
method for evaluating the recovery performance of distributed algorithms.
In \Sect{}~\ref{sec: MISalgo}, we describe the details of our MIS algorithm.
Then, in \Sect{}~\ref{sec: analysisProportional}, we show that each node not
affected by a topology change will reach an output state in $\bigO(\log^2 n)$
rounds.
In \Sect{} \ref{sec: quality} and in \Sect{} \ref{sec: analysisGreedy}, we finish the
analysis of our MIS algorithm by establishing an
$\bigO((C + 1) \log^2 n)$
upper bound on the global runtime.
The runtime of the new MIS algorithm is shown to be near-optimal in
\Sect{}~\ref{sec: lower} by proving that the global runtime of any algorithm
is $\Omega(C)$.
In \Sect{}~\ref{sec: logDist}, we show that the runtime of any node
$u$ can be further bounded by
$\bigO((C_u + 1) \log^2 n)$,
where $C_u$ is the number of topology changes that occurred within
$\bigO(\log n)$
hops from $u$.
Some interesting open questions are listed in
\Sect{}~\ref{section:conclusions}.

\paragraph{Related Work}
The standard model for a network of communicating devices is the message
passing model \cite{PelegBook,Linial1992}.
There are several variants of this model, where the power of the network has
been weakened.
Perhaps the best-known variant of the message passing model is the congest
model~\cite{PelegBook}, where the message size is limited to size logarithmic in
the size of the input graph.
A step to weaken the model further is to consider interference of messages,
i.e., a node only hears a message if it receives a single message per round
--- cf.\ the \emph{radio network model}~\cite{ChlamtacKuten85}.
In the \emph{beeping model}~\cite{Flury2010Slotted, CornejoKuhn10}, the
communication capabilities are reduced further by only allowing to send
beeps that do not carry information, where a listening node cannot
distinguish between a single beep and multiple beeps transmitted by its
neighbors.

The models mentioned above focus on limiting the communication but not the
computation, i.e., the nodes are assumed to be strong enough to perform
unlimited (local) Turing computations in each round.
Networks of nodes weaker than Turing machines have been extensively studied in
the context of \emph{cellular automata}~\cite{vonNeumann66, Gardner70,
Wolfram02}.
While the cellular automata model typically considers the deployment of finite
state machines in highly regular graph structures such as the grid, the
question of modeling cellular automata in arbitrary graphs was tackled by Marr
and H\"{u}tt in~\cite{Marr2009}, where a node changes its binary state
according to the densities of its neighboring states.
Another extensively studied model for distributed computing in a network of
finite state machines is \emph{population protocols} \cite{AngluinADFP2006}
(see also \cite{AspnesR2009, MichailCS11}), where the nodes communicate through
a series of pairwise rendezvous.
Refer to \cite{Emek2013} for a comprehensive account of the similarities and
differences between the Stone Age model and the models of cellular automata
and population protocols.

Distributed computing in dynamic networks has been extensively
studied~\cite{APPS92, Walter1998, Li2004, Kuhn2005, Hayes2009}.
A classic result by Awerbuch and Sipser states that under the message
passing model, any algorithm designed to run in a static network can be
transformed into an algorithm that runs in a dynamic network with only a
constant multiplicative runtime overhead~\cite{Awerbuch1988}.
However, the transformation of Awerbuch and Sipser requires storing the whole
execution history and sending it around the network, which is not possible
under the Stone Age model.
Some dynamic network papers rely on the assumption that the topology changes
are spaced in time so that the system has an opportunity to recover before
another change occurs~\cite{Korman2008, Malpani2003, Konig2013}.
The current paper does not make this assumption.

The \emph{maximal independent set (MIS)} problem has a long history in the
context of distributed algorithms~\cite{Valiant82, Luby86, ABI86, Cole1986,
Linial1992, Panconesi1996}.
Arguably the most significant breakthrough in the study of message passing MIS
algorithms was the $\bigO(\log n)$ algorithm of Luby~\cite{Luby86} (developed
independently also by Alon et al.~\cite{ABI86}).
Later, Barenboim et al.~\cite{Barenboim2012} showed an upper bound of $2^{\bigO\left(\sqrt{\log \log n}\right)}$ in the case of polylogarithmic maximum degree and an $\bigO(\log \Delta + \sqrt{\log n})$ bound for general graphs.
For growth bounded graphs, it was shown by Schneider et
al.~\cite{sw10} that MIS can be computed in
$\bigO(\log^* n)$
time.
In a recent work, Ghaffari studied the \emph{local} complexity of computing an MIS, where the time complexity is measured from the perspective of a single node, instead of the whole network~\cite{Ghaffari2016}. 
In a similar spirit, we provide, in addition to a global runtime bound, a runtime analysis from the perspective of a single node.

In the radio networks realm, with $f$ channels,
an MIS can be computed in
$\Theta(\log^2 n / f) + \tilde{\bigO}(\log n)$
time~\cite{Daum2013}, where $\tilde{\bigO}$ hides factors polynomial in $\log \log n$.
The MIS problem was extensively studied also under the beeping
model~\cite{AfekABHBB2011, AfekABCHK2011, ScottJX2013}.
Afek et al.~\cite{AfekABCHK2011} proved that if the nodes are provided with an
upper bound on $n$ or if they are given a common sense of time, then the MIS
problem can be solved in
$\bigO(\log^{\bigO(1)} n)$ time.
This was improved to
$\bigO (\log n)$
by Scott et al.~\cite{ScottJX2013} assuming that the nodes can detect sender collision

On the negative side, the seminal work of Linial~\cite{Linial1992} provides
a runtime lower bound of
$\Omega(\log^* n)$
for computing an MIS in an $n$-node ring under the message passing model~\cite{Linial1992}.
Kuhn et al.~\cite{Kuhn2016} established a stronger lower bound for general graphs stating that it
takes
\[
	\min { \Omega\left(\sqrt{\frac{\log n}{\log \log n}}, \frac{\log \Delta}{\log \log \Delta}\right) }
\]
rounds to compute an MIS, where $\Delta$ is the maximum degree of the input
graph.
For uniform algorithms in radio networks (and therefore, also for the beeping
model) with asynchronous wake up schedule, there exists a lower bound of
$\Omega(\sqrt{n / \log n})$
communication rounds~\cite{AfekABCHK2011}.

Containing faults within a small radius of the faulty node has been studied in
the context of \emph{self-stabilization}~\cite{Ghosh1996}.
An elegant MIS algorithm was developed under the assumption that the
activation times of the nodes are controlled by a \emph{central daemon}
who activates the nodes one at a time~\cite{Shukla1995, Lin2003}.
In contrast, we follow the common assumption that all nodes are activated
simultaneously in each round.
In the self-stabilization realm, the performance of an algorithm is typically
measured as a function of some network parameter, such as the size of
the network or the maximum degree, whereas in the current paper, the
performance depends also on the number of failures.

With respect to the performance evaluation, perhaps the works closest to ours
are by Kutten and Peleg~\cite{Kutten1999, Kutten2000}, where the concepts of
\emph{mending algorithms} and \emph{tight fault locality} are introduced.
The idea behind a fault local mending algorithm is to be able to recover
a legal state of the network after a fault occurs, measuring the performance
in terms of the number of faults.
The term tight fault locality reflects the property that an algorithm running
in time $\bigO(T(n))$ without faults is able to mend the network in time
$\bigO(T(F))$,
where $F$ denotes the number of faults.
The algorithm of Kutten and Peleg recovers an MIS in time $\bigO(\log F)$, but
they use techniques that require nodes to count beyond constant numbers, which
is not possible in the Stone Age model.
Furthermore, they consider transient faults, whereas we consider permanent
changes in the network topology.


\section{Model}
\label{section:model}
Consider some network represented by an undirected graph $G = (V, E)$, where
the nodes in $V$ correspond to the network's computational units and the edges
represent bidirectional communication channels.
Adopting the (static) Stone Age model of~\cite{Emek2013}, each node $v \in V$
runs an algorithm captured by the $8$-tuple
$\Pi = \left\langle Q, q_{0}, q_{yes}, q_{no}, \Sigma, \sigma_0, b, \delta
\right\rangle$,
where $Q$ is a fixed set of \emph{states};
$q_{0} \in Q$ is the \emph{initial state} in which all nodes reside when they
wake up for the first time;
$q_{yes}$ and $q_{no}$ are the \emph{output states}, where the former (resp.,
latter) represents membership (resp., non-membership) in the output MIS;
$\Sigma$ is a fixed \emph{communication alphabet};
$\sigma_0 \in \Sigma$ is the \emph{initial letter};
$b \in \mathbb{Z}_{>0}$ is a \emph{bounding parameter}; and
$\delta : Q \times \{ 0, 1, \dots, b \}^{|\Sigma|} \rightarrow 2^{Q \times
(\Sigma \cup \{\varepsilon\})}$
is the \emph{transition function}.
For convenience, we sometimes denote a state transition from $q$ to $q'$ by $(q \rightarrow q')$ and omit the rules associated with this transition from the notation.
\footnote{%
For simplicity, the definition presented here is already tailored to the MIS problem. For a general problem $P$, the initial state $q_{0}$ can be replaced with a set of initial states, representing the possible inputs of $P$, and we can have as many output states as needed to represent the possible outputs of $P$ (captured in the MIS case by the two states $q_{yes}$ and $q_{no}$).}

Node $v$ communicates with its neighbors by transmitting messages that consist
of a single letter $\sigma \in \Sigma$ such that the same letter $\sigma$ is
sent to all neighbors.
It is assumed that $v$ holds a port $\phi_u(v)$ for each neighbor
$u$ of $v$ in which the last message (a letter in $\Sigma$) received from $u$
is stored.
Transmitting the designated empty symbol $\varepsilon$ corresponds to the case
where $u$ does not transmit any message.
In other words, when node $u$ transmits the $\varepsilon$ letter, the letters in ports $\phi_u(v)$, for all neighbors $v$ of $u$, remain unchanged.
In the beginning of the execution, all ports contain the initial letter $\sigma_0$.

The execution of the algorithm proceeds in discrete synchronous rounds indexed
by the positive integers.
In each round $r \in \mathbb{Z}_{> 0}$, node $v$ is in some state $q \in Q$.
Let
$\sharp(\sigma)$
be the number of appearances of the letter
$\sigma \in \Sigma$
in $v$'s ports in round $r$ and let
$\langle \min \{ \sharp(\sigma), b \} \rangle_{\sigma \in \Sigma}$
be a $\Sigma$-indexed vector whose $\sigma$-entry is set to the minimum
between $\sharp(\sigma)$ and the bounding parameter $b$.
Then the state $q'$ in which $v$ resides in round $r + 1$ and the message
$\sigma'$ that $v$ sends in round $r$ (appears
in the corresponding ports of $v$'s neighbors in round $r + 1$ unless
$\sigma' = \varepsilon$) are chosen uniformly at random among the pairs in 
\[
\delta \left( q, \langle \min \{ \sharp(\sigma), b \} \rangle_{\sigma
\in \Sigma} \right)
\subseteq
Q \times (\Sigma \cup \{ \varepsilon \}) \, .
\]
This means that $v$ tosses an unbiased die with
\[
	|\delta(q, \langle \min \{ \sharp(\sigma), b \} \rangle_{\sigma \in \Sigma})|
\]
faces when deciding on $q'$ and $\sigma'$.

To ensure that our state transitions are well defined, we require that for all $q \in Q$ and $x \in \{ 0, 1, \dots, b \}^{|\Sigma|}$ it holds that
$|\delta(q, x)| \geq 1$
and say that a state transition is \textit{deterministic} if
$|\delta(q, x)| = 1$.

\paragraph{Topology changes}
In contrast to the model presented in \cite{Emek2013}, our network model
supports dynamic \emph{topology changes} that belong to the following four
classes:
\begin{enumerate}

\item
\emph{Edge deletion}:
Remove a selected edge $e = \{u, v\}$ from the current graph.
The corresponding ports $\phi_u(v)$ and $\phi_v(u)$ are removed and the
messages stored in them are erased.

\item
\emph{Edge insertion}:
Add an edge connecting nodes $u, v \in V$ to the current graph.
New ports $\phi_u(v)$ and $\phi_v(u)$ are introduced storing the initial
letter $\sigma_{0} \in \Sigma$.

\item
\emph{Node deletion}:
Remove a selected node $v \in V$ from the current graph with all its incident
edges.
The corresponding ports $\phi_v(u)$ of all $v$'s neighbors $u$ are eliminated
and the messages stored in them are erased.

\item
\emph{Node insertion}:
Add a new isolated (i.e., with no neighbors) node to the current graph.
Initially, the node resides in the initial state $q_{0}$.

\end{enumerate}
We assume that the schedule of these topology changes is controlled by an
\textit{oblivious adversary}.
Formally, the strategy of the adversary associates a (possibly empty) set of
topology changes with each round
$r \in \mathbb{Z}_{> 0}$
of the execution so that the total number of changes throughout the execution,
denoted by $C$, is finite.
This strategy may depend on the algorithm $\Pi$, but not on the random choices
made during the execution.

To be precise, each round is divided into four successive steps as follows:
$(i)$
messages arrive at their destination ports;
$(ii)$
topology changes occur;
$(iii)$
the transition function is applied; and
$(iv)$
messages are transmitted.
In particular, a message transmitted by node $v$ in round $r$ will not be read
by node $u$ in round $r + 1$ if edge $\{ u, v \}$ is inserted or deleted in
round $r + 1$.
%
%
%
It is convenient to define the \emph{adversarial graph sequence}
$\mathcal{G} = G_{1}, G_{2}, \dots$
so that
$G_{1}$ is the initial graph and
$G_{r + 1}$ is the graph obtained from $G_{r}$ by applying to it the topology
changes scheduled for round $r$.
By definition, $\mathcal{G}$ is fully determined by the initial graph and the
adversarial policy (and vice versa).
The requirement that $C$ is finite implies, in particular, that $\mathcal{G}$
admits an infinite suffix of identical graphs.
Let $n$ be the largest number of nodes that co-existed in the same round, i.e.,
$n = \max_{r} |V(G_{r})|$,
where $V(G_{r})$ is the node set of $G_{r}$.

\paragraph{Correctness}
We say that node $u$ resides in state $q$ in round $r$ if the state of $u$ is
$q$ at the beginning of round $r$.
The algorithm is said to be in an \textit{output configuration} in round $r$
if every node $u \in G_r$ resides in an output state ($q_{yes}$ or $q_{no}$).
The output configuration is said to be \emph{correct} if the states of the
nodes (treated as their output) correspond to a valid MIS of the current
graph $G_{r}$.
An algorithm $\Pi$ is said to be \emph{correct} if the following conditions
are satisfied for every adversarial graph sequence:
(C1)
If $\Pi$ is in a non-output configuration in round $r$, then it will move
to an output configuration in some (finite) round $r' > r$ w.p.\
$1$.\footnote{%
Throughout, we use w.p.\ to abbreviate ``with probability'' and w.h.p.\ to
abbreviate ``with high probability'', i.e., with probability $n^{-c}$ for any
constant $c$.}
(C2)
If $\Pi$ is in an output configuration in round $r$, then this output
configuration is correct (with respect to $G_{r}$).
(C3)
If $\Pi$ is in an output configuration in round $r$ and
$G_{r + 1} = G_{r}$,
then $\Pi$ remains in the same output configuration in round $r + 1$.

\paragraph{Restrictions on the Output States}
Under the model of \cite{Emek2013}, the nodes are not allowed to change their
output once they have entered an output state.
On the other hand, this model allows for multiple output states that
correspond to the same problem output (``yes'' and ``no'' in the MIS case).
In other words, it is required that the output states that
correspond to each problem output form a sink of the transition function.
Since our model accommodates dynamic topology changes which might turn a
correct output configuration into an incorrect one, we lift this restriction:
our model allows transitions from an output state to a non-output state, thus
providing the algorithm designer with the possibility to escape output
configurations that become incorrect.
Nevertheless, to prevent nodes in an output state from taking any
\emph{meaningful} part in the computation process, we introduce the following
new (with respect to \cite{Emek2013}) restrictions:
(1) each possible output is represented by a unique output state;
(2) a transition from an output state to itself is never accompanied by a
letter transmission (i.e., it transmits $\varepsilon$);
(3) all transitions originating from an output state must be deterministic; and
(4) a transition from an output state must lead either to itself or to a
non-output state.

\paragraph{Runtime}
Fix some adversarial graph sequence
$\mathcal{G} = G_{1}, G_{2}, \dots$
and let $\eta$ be an execution of a correct algorithm $\Pi$ on $\mathcal{G}$
(determined by the random choices of $\Pi$), where the execution corresponds to a sequence of global configurations of the system.
Round $r$ in $\eta$ is said to be \textit{silent} if $\Pi$ is in a
correct output configuration and no topology change occurs in round $r$.
The \textit{global runtime} of $\eta$ is the number of non-silent rounds.
Node $v$ is said to be \emph{active} in round $r$ of $\eta$ if it resides in a
non-output state, i.e., some state in $Q - \{ q_{yes}, q_{no} \}$.
The \emph{(local) runtime} of $v$ in $\eta$ is the number of rounds in
which $v$ is active.

Let $N_{r}(v)$ be the (inclusive) neighborhood of $v$ and let $E_r(v)$ be the edges incident to $v$ in $G_{r}$.
Node $v$ is said to be \emph{affected} under $\mathcal{G}$ if either $v$ or
one of its neighbors experienced an edge insertion/deletion, that is, there
exists some round $r$ and some node $u \in N_{r}(v)$ such that $E_r(u)$ is not identical to $E_{r + 1}(u)$.
Algorithm $\Pi$ is said to be \emph{effectively confining} if the following
conditions are satisfied for every adversarial graph sequence $\mathcal{G}$:
(1)
the expected maximum local runtime of the non-affected nodes is
$\log^{\bigO(1)} n$; and
(2)
the expected global runtime is $(C + 1) \cdot \log^{\bigO(1)} n$,
namely, $\log^{\bigO(1)} n$ when amortized over the number of topology
changes.
Notice that the bound on the global runtime directly implies the same bound on
the local runtime of any affected node.


\newcommand{\WIN}{\textrm{WIN}}
\section{An MIS Algorithm}
\label{sec: MISalgo}
Our main goal is to design an algorithm under the Stone Age model for the maximal independent set (MIS) problem that is able to tolerate topology changes. 
For a graph $G = (V, E)$, a set of nodes $I \subseteq V$ is independent if for all $u, v \in I$, $\{u, v\} \notin E$. 
An independent set $I$ is maximal if there is no other set $I' \subseteq V$ such that $I \subset I'$ and $I'$ is independent. 

Following the terminology introduced in the model section, we show that our algorithm is effectively confining. In \Sect{}~\ref{sec: lower}, we provide a straightforward lower bound example that shows that under our model, our solution is within a polylogarithmic factor from optimal. In other words, the linear dependency on the number of changes is inevitable.
Throughout, the bounding parameter of our algorithm is one.
This indicates that a node is able to distinguish between the cases of having a single appearance of a letter $\sigma$ in some of its ports and not having any appearances of $\sigma$ in any of its ports.

The basic idea behind our algorithm is that we first use techniques from~\cite{Emek2013} to come up with an MIS quickly and then we fix any errors that the topology changes might induce. 
In other words, our goal is to first come up with a \textit{proportional} MIS, where the likelihood of a node to join the MIS is inversely proportional to the number of neighbors the node has that have not yet decided their output. 
We partition the state set of our algorithm into two components. One of the components contains the input state and is responsible for computing the proportional MIS. 
Once a node has reached an output state or detects that it has been affected by a topology change, it transitions to the second component, which is responsible for fixing the errors, and never enters an active state of the first component. 
One of the reasons behind dividing our algorithm into two seemingly similar components is that the first component has stronger runtime guarantees, but requires that the nodes start in a ``nice'' configuration, whereas the second component does not have this requirement while providing weaker runtime bounds.

\subsection{The Proportional Component}
The component used to compute the proportional MIS, referred to from here on as \Yuvals{}, follows closely the design from~\cite{Emek2013}.
The goal of the rest of the section is to introduce a slightly modified version of their algorithm that allows the non-affected nodes to ignore affected nodes while constructing the initial MIS.

\Yuvals{} consists of states $Q = \{S, D_1, D_2, U_0, U_1, U_2, W, L \}$, where $Q_a = Q - \{W, L \}$ are referred to as \textit{active} states and $q_{yes} = W$ and $q_{no} = L$ as \textit{passive} states.
We set $S$ as the initial state. 
The communication alphabet is identical to the set of states; the algorithm is designed so that node $v$ transmits letter $q$ in round $r$ whenever it resides in state $q$ in round $r + 1$, i.e., state $q$ was returned by the application of the transition function in round $r$. 

\begin{figure*}
	\centering
	\resizebox{12cm}{!}{\includegraphics{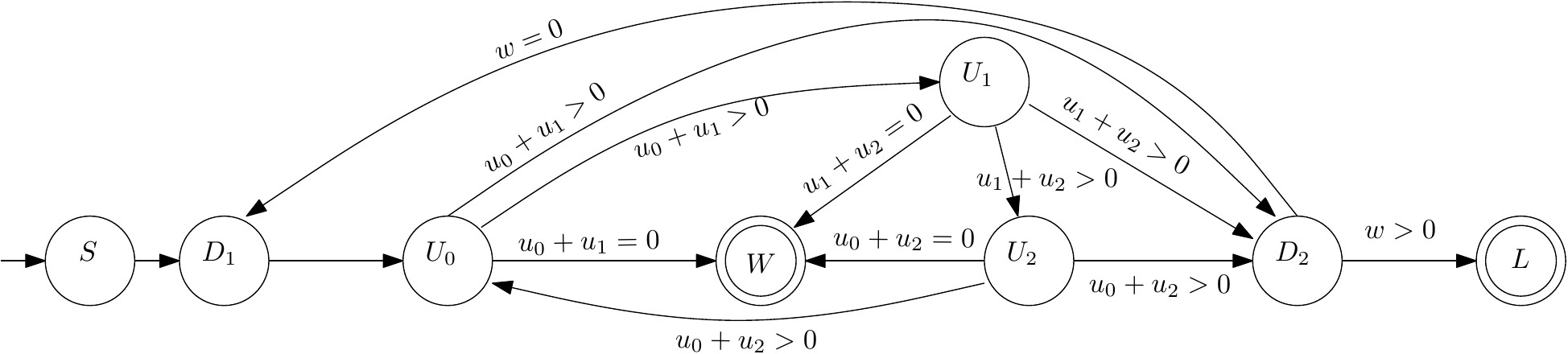}}
	\caption{The transition function of \Yuvals{}. Each edge describes a possible transition and is labeled by the condition that is associated with the corresponding state transition. 
	The lower case letters correspond to the number (counted up to the bounding parameter one) of appearances of the corresponding letter, e.g., the condition $u_1 + u_2 > 0$ is satisfied for node $v$ if at least one port of $v$ contains either letter $U_1$ or $U_2$.
	Each transition $q \rightarrow q'$ is delayed by state $q''$ if there exists a transition $q'' \to q$ (rules associated with delays are omitted from the figure for clarity).}
	\label{figure: fair}
\end{figure*}

\begin{IntuitionSpotlight}
The main idea of \Yuvals{} is that each node $v$ iteratively competes against its neighbors and the winner of a competition enters the MIS.
In every competition, node $v$ tosses fair coins in a synchronized way with its active neighbors.
If a coin toss shows tails, node $v$ \emph{loses} the current competition.
Conversely, node $v$ wins the competition if it is the last node that remains in the competition.
In the case of a tie, a new competition is started.
\end{IntuitionSpotlight}

Let us denote a state transition between states $q$ and $q'$ by $q \rightarrow q'$ omitting the rules associated with this transition.
To ease our notation and to make our illustrations more readable, we say that each state transition $q \rightarrow q'$ is \emph{delayed} by a set $\mathcal{D} = \mathcal{D}(q \rightarrow q') \subseteq Q$ of delaying states. 
For $q \rightarrow q'$ the set of delaying states corresponds to the states in $Q - \{ q \}$ from which there is a state transition to $q$. 
Transition $q \rightarrow q'$ being delayed by state $q''$ indicates that node $v$ does not execute transition $q \rightarrow q'$ as long as there is at least one letter in its ports that corresponds to the state $q''$.
We say that node $v$ in state $q$ is delayed if there is a neighbor $w$ of $v$ that resides in a state that delays at least one of the transitions from $q$, i.e., node $v$ cannot execute some transition because of node $w$.
The transition function of \Yuvals{} is depicted in Figure~\ref{figure: fair}. 

Every competition, for an active node $v$ and its active neighbors, begins by the nodes entering state $U_0$.
During each round every node in state $U_j, j \in \{0, 1, 2\}$, assuming that it is not delayed, tosses a fair coin and proceeds to $U_{j + 1 \mod 3}$ if the coin shows heads and to $D_2$ otherwise. 
Notice that a competition does not necessarily start in the same round for two neighboring nodes.
The circular delaying logic of the $U$-states ensures that node $v$ is always at most one coin toss ahead of its neighbors.
If node $v$ observes that it is the only node in its neighborhood in a $U$-state, it enters state $W$, which corresponds to joining the MIS. 
In the case where, due to unfortunate coin tosses, $v$ moves to state $D_2$ along with all of its neighbors, node $v$ restarts the process by entering state $D_1$.

We call a maximal contiguous sequence of rounds $v$ spends in state $q \in Q_a$ a \textit{$q$-turn} and a maximal contiguous sequence of turns starting from a $D_1$-turn and not including another $D_1$-turn a \textit{tournament}. 
For readability, we sometimes omit the index of a $U_j$-turn for some $j$ and simply write $U$-turn.
We index the tournaments and the turns within the tournament by positive integers.
The correctness of \Yuvals{} in the case of no topology changes follows the same line of arguments as in~\cite{Emek2013}.

We note that dynamically adding edges between nodes in states $U_0, U_1,$ and $U_2$ could potentially cause a deadlock. 
To avoid this, we add a state transition, that is not delayed by any state, from each state $Q - \{ S, L \}$ to state $D'$ (that is part of the fixing component explained in \Sect{}~\ref{sec: fixing}) under the condition that a node reads the initial letter $S$ in at least one of its ports.
In other words, node $v$ enters state $D'$ if an incident edge is added after round $r$ in which $v$ was added into the graph.
Refer to Figure~\ref{figure: complete} for a detailed, though somewhat cluttered, illustration.
\begin{figure}[!h]
 \centering
 \includegraphics[scale=0.6]{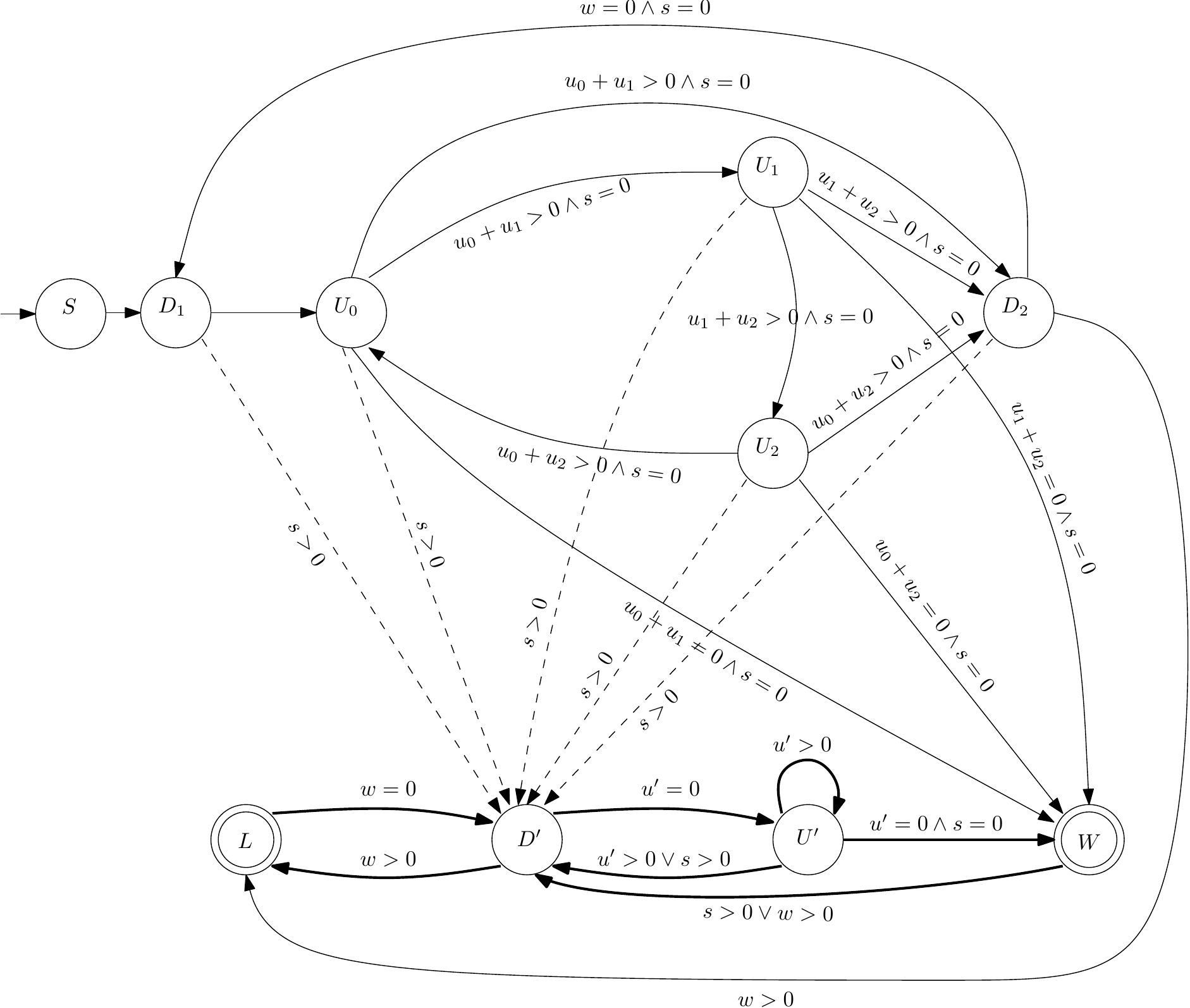}
 \caption{The transition function of our MIS algorithm.
	The dashed lines correspond to the transitions that exclude nodes from the execution of \Yuvals{}.
	A state transition $q \rightarrow q'$ is delayed by state $q''$ if there is a transition indicated by a dashed edge or a thin edge from state $q''$ to $q$. 
	A bold edge from $q$ to $q'$ implies that $q' \rightarrow q''$ is not delayed by $q$.
	As an example, the state transition from $D'$ to $U'$ is not delayed by state $L$ and conversely, the state transition from $D'$ to $U'$ is delayed by state $D_1$.
	Similarly to the other state machine illustrations, the lower case letters in the transition rules correspond to the number of appearances of the corresponding letter and the rules associated with the delays are omitted for clarity. 
	If a node $v$ is in state $q$ and none of the conditions associated with the transitions from $q$ are satisfied (e.g., $v$ is in state $D'$, has a neighbor in state $U'$, and no neighbors in state $W$), then $v$ remains in state $q$. 
	These self-loops are omitted from the picture for clarity.
	}
 \label{figure: complete}
\end{figure}
If node $v$ transitions to state $D'$ due to this condition being met or if $v$ is deleted while being in an active state, we say that $v$ is \emph{excluded} from the execution of \Yuvals{}.

\subsection{The Greedy Component}
\label{sec: fixing}
Now we extend \Yuvals{} to fix the MIS in the case that a topology change leaves the network in an illegal state. 
This extension of \Yuvals{} is referred to as the \emph{greedy component} and denoted by \greedy{}. 
\begin{IntuitionSpotlight}
The basic idea is that nodes in states $L$ and $W$ verify locally that the configuration corresponds to an MIS.
If a node detects that the local configuration does not correspond to an MIS anymore, it revokes its output and tries again to join the MIS according to a slightly modified competition logic.
The crucial difference is that we design Greedy without the circular delay logic. 
The reason behind the design is twofold: First, we cannot afford long
chains of nodes being delayed. Second, since a dynamic change is only allowed to affect its 1-hop
neighborhood, we cannot maintain the local synchrony of adjacent nodes similarly to Proportional.
\end{IntuitionSpotlight}

More precisely, the state set of the algorithm is extended by $Q_2 = \{ U', D' \}$, where states $U'$ or $D'$ are referred to as active. 
The state transition function of the algorithm is illustrated in Figure~\ref{fig: greedy}. 
To detect an invalid configuration, we add the following state transitions from the output states $W$ and $L$:
a transition from state $L$ to state $D'$ in case that a node $v$ resides in state $L$ and does not have any letters $W$ in its ports and a transition from state $W$ to state $D'$ in case $v$ resides in state $W$ and reads a letter $W$ in its ports.
Finally, to prevent nodes in the active states of \Yuvals{} and \greedy{} from entering state $W$ at the same time and thus, inducing an incorrect output configuration, we add a transition from $W$ and $U'$ to $D'$ in case that node $v$ reads the initial letter $S$.

The logic of the new states $U'$ and $D'$ is the following: node $v$ in state $D'$ goes into state $U'$ if it does not read the letter $U'$ in its ports. 
Then $u$ and its neighbors that transitioned from $D'$ to $U'$ during the same round compete against each other:
In every round, $v$ tosses a fair coin and goes back to state $D'$ if the coin shows tails. We emphasize that nodes in state $U'$ are not delayed by nodes in state $D'$. 
Then, if $v$ remains the only node in its neighborhood in state $U'$, it declares itself as a winner and moves to state $W$. 
Conversely, if a neighbor of $v$ wins, $v$ goes into state $L$.

\begin{figure}
	\centering
	\resizebox{12cm}{!}{\includegraphics{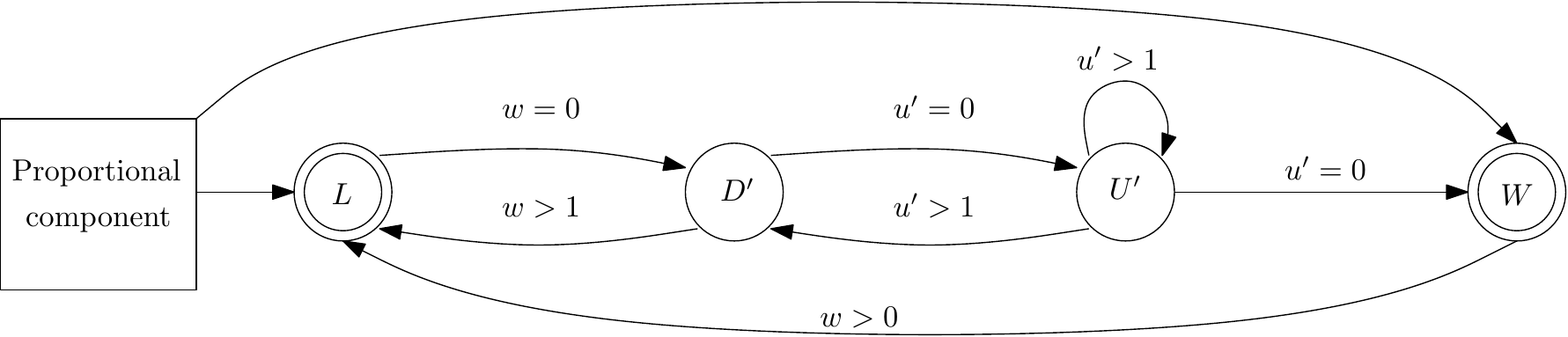}}
	\caption{The transition function of \greedy{}. The conditions for the transitions are denoted by the labels on the edges and a lower case letter corresponds to the number of appearances of the corresponding letter. Unlike in \Yuvals{}, a transition from $q'$ to $q$ does not imply that $q'$ delays any transition from $q$.}
	\label{fig: greedy}
\end{figure}

\begin{observation}
	A non-affected node is never in an active state of \greedy{}.
	\label{obs: non-affectedNotGreedy}
\end{observation}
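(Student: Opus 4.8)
The plan is to show that a non-affected node $v$ never enters state $D'$; this suffices because the only transition into $U'$ originates at $D'$, so $D'$ is the unique gateway into the active states of \greedy{}. I would then enumerate every transition whose target is $D'$ and rule each one out for a non-affected node. From the description of the combined automaton (Figure~\ref{figure: complete}), these fall into three groups: (i) the transitions from a non-initial active state, and from $W$ and $U'$, that fire upon reading the initial letter $S$; (ii) the exclusion of $v$ caused by its deletion or by an edge insertion incident to $v$; and (iii) the two local MIS-violation detectors, namely $L \to D'$ when $v$ reads no $W$ and $W \to D'$ when $v$ reads a $W$.

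Groups (i) and (ii) I expect to dispose of directly from the definition of \emph{affected}. Since $v$ is non-affected, no edge incident to $v$ is ever inserted or deleted, and $v$ is never deleted (deleting $v$ removes its incident edges, so $E_r(v)\neq E_{r+1}(v)$, which would make $v$ affected); hence the exclusion transitions of group (ii) cannot fire. For group (i), I would argue that $v$ never reads the letter $S$ while outside state $S$: all of $v$'s ports are created in round $1$ and, by non-affectedness, never re-initialized; the initial $S$ they carry is overwritten already in round $1$, since every node leaves $S$ in round $1$ (the transitions out of $S$ are delayed by no state, because no transition leads back to $S$) and the \Yuvals{} rule has it transmit a genuine letter rather than $\varepsilon$. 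After round $1$ no node is ever again in state $S$ unless it is freshly inserted, and a freshly inserted node adjacent to $v$ would require an edge insertion at $v$, contradicting non-affectedness. Thus $v$ reads $S$ only in round $1$, while it is itself in state $S$, where the group-(i) detector does not apply.

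The substance of the argument is group (iii), and I would handle it by maintaining the invariant that at all times a non-affected node in state $W$ reads no $W$ and a non-affected node in state $L$ reads some $W$. The key enabling fact is that every neighbor of a non-affected node also has a fixed edge set (if a neighbor $u$ gained or lost an edge then, since $u \in N_r(v)$, node $v$ would be affected), so no neighbor of $v$ ever reads $S$ or is excluded either; consequently a neighbor can enter $D'$ only through the very same detectors. I would then show the two directions of the invariant are self-reinforcing: once $v$ wins and enters $W$ it transmits $W$, forcing every active neighbor to $L$, and each such neighbor, seeing $v$'s persistent $W$, has its $L \to D'$ detector permanently blocked, so it can never re-enter the competition and $v$ never acquires a $W$-neighbor; dually, the neighbor whose victory sent $v$ to $L$ remains in $W$ for the same reason, so $v$ always reads a $W$.

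The main obstacle is maintaining this invariant in the presence of \emph{affected} neighbors. A neighbor $u$ of the non-affected node $v$ may itself be affected---this happens precisely when an edge changes two hops from $v$, at a neighbor of $u$ other than $v$---in which case $u$ may run \greedy{}, transmit the letters $D'$ and $U'$, and attempt to rejoin the MIS. I must therefore verify that these perturbations cannot flip the local correctness around $v$: that \Yuvals{} still never lets two adjacent nodes win in the same neighborhood even when some incident ports carry \greedy{} letters, and that the blocking of the $L \to D'$ detector by a persistent $W$-neighbor is robust to whatever an affected neighbor does. Establishing this robustness---together with a careful treatment of the timing by which $v$ first reaches an output state relative to its neighbors---is the crux of the proof, and is where I would concentrate the detailed case analysis over the combined transition function.
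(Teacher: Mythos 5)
Your skeleton matches the paper's proof: rule out the letter $S$ and the exclusion transitions via non-affectedness (your groups (i) and (ii) are handled exactly as in the paper), and then keep the local output configuration around $v$ correct forever so that the two detectors $L \to D'$ and $W \to D'$ never fire. Your enabling fact that every neighbor of a non-affected node also has a fixed edge set (hence is never excluded and never reads $S$) is likewise the right one. The problem is that you stop at what you yourself call the crux: you state that the invariant must be robust against \emph{affected} neighbors that run \greedy{}, and then defer this to an unspecified ``detailed case analysis over the combined transition function.'' A proof proposal whose central step is declared but not carried out has a genuine gap, and here the gap is precisely the part that does the work.

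What closes it in the paper is a single design property, not a case analysis: \emph{no node --- in either component, affected or not --- ever enters state $W$ while it has a neighbor residing in state $W$.} Every transition into $W$ (the win transition of \Yuvals{} and the win transition $U' \to W$ of \greedy{}) is guarded by the absence of the letter $W$ in the ports, and a node that sits in state $W$ transmits only $\varepsilon$, so the letter $W$ it deposited in its neighbors' ports on entry persists until it leaves $W$ or the edge disappears (neither of which can happen next to a non-affected node, by your own enabling fact). With this property both directions of your invariant follow immediately and uniformly, with no need to distinguish which component a neighbor is running: if $v$ is in $W$, no neighbor of $v$ can ever join $W$, so the $W \to D'$ detector stays silent; if $v$ is in $L$, its covering neighbor $w$ (which never reads $S$, since $w$'s edge set is fixed) never reads a $W$ either, hence remains in $W$ forever, so the $L \to D'$ detector stays silent. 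Your worry about \greedy{} letters ($D'$, $U'$) appearing in ports is thus a non-issue for this observation: those letters cannot trigger any transition out of $W$ or $L$; only $S$ and $W$ can, and both have been excluded. Identifying and invoking this $W$-guard-plus-persistence property is the missing idea; without it your proposal is a plan rather than a proof.
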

\begin{proof}
	Consider a non-affected node $u$.
Since $u$ is non-affected, it will never read the initial letter $S$ in its ports after the first round. 
	The design of \Yuvals{} guarantees that either $u$ enters state $W$ and no neighbor of $u$ enters state $W$ or $u$ enters state $L$ and at least one neighbor $v$ of $u$ enters state $W$. 
	Due to the design of our algorithm, a node will never enter state $W$ if it has a neighbor in state $W$ and therefore, if $u$ once enters state $W$, it will never reside in any other state.
	By the same argument, if $u$ is in state $L$, the neighbor $v$ in state $W$ will guarantee that $u$ never leaves state $L$.
\end{proof}
\section{Analysis}
\label{sec: analysis}
\paragraph{An Auxiliary Execution of \Yuvals{}}
The authors of \cite{Emek2013} analyze an algorithm very similar to the one
induced by \Yuvals{}, showing that its runtime (on a static graph) is $O
(\log^{2} n)$ in expectation and w.h.p.
In their analysis, they prove that with a positive constant probability,
the number of edges decreases exponentially from one tournament to the next.
\begin{IntuitionSpotlight}
The dynamic nature of the graph $\mathcal{G}$ prevents us from analyzing the
execution of \Yuvals{} in the same manner.
To overcome this obstacle, we introduce an auxiliary execution $\pkaks$ of
\Yuvals{} that turns out to be much easier to analyze, yet serves as a good
approximation for the real execution.
\end{IntuitionSpotlight}

Consider some execution $\eta$ of \Yuvals{} on $\mathcal{G}$ and recall that
$\eta$ is determined by the adversarial graph sequence $\mathcal{G}$ and by
the coin tosses of the nodes.
Let the \emph{last} $U$-turn stand for a shorthand notation for a $U_j$-turn, where $j$ is the largest index for which a $U_j$ turn exists in tournament $i$, under $\pkaks(\eta)$.
We associate with $\eta$ the \emph{auxiliary execution}
$\pkaks = \pkaks(\eta)$
obtained from $\eta$ by copying the same coin tosses (for each node) and
modifying the adversarial policy by applying to it the following two
rules for
$i = 1, 2, \dots$:
(i)
If node $u$ is \emph{excluded} during tournament $i$ under $\eta$, then,
instead, we \emph{delete} node $u$ immediately following its last $U$-turn.
(ii)
If node $u$ becomes passive during tournament $i$ under $\eta$ and it would
have remained active throughout tournament $i$ under $\pkaks(\eta)$, then we
also \emph{delete} node $u$ immediately following its last $U$-turn
under $\pkaks(\eta)$.

Let $V_{\eta}(i)$ and $V(i)$ be the sets of nodes for which tournament $i$
exists under $\eta$ and $\pkaks(\eta)$, respectively.
Let $X^{v}(i)$ and $Y^{v}(i)$ denote the number of $U$-turns of node $v$ in
tournament $i$ under $\eta$ and $\pkaks(\eta)$, respectively;
to ensure that $X^{v}(i)$ and $Y^{v}(i)$ are well defined, we set
$X^{v}(i) = 0$
if
$v \notin V_{\eta}(i)$
and
$Y^{v}(i) = 0$
if
$v \notin V(i)$,
i.e.,
if tournament $i$ does not exist for $v$ under $\eta$ and $\pkaks(\eta)$,
respectively.
Let $\nbh_{\eta}(v, i)$ and $\nbh(v, i)$ denote the (exclusive) neighborhood
of $v$ at the beginning of tournament $i$ under $\eta$ and $\pkaks(\eta)$,
respectively.
Observe that since state $D_2$ is delayed by states $U_j$ for all $j$, no neighbor of node $u$ can enter tournament $i + 1$ while node $u$ is still in tournament $i$. 
Therefore,
$\nbh_{\eta}(v, i) \subseteq V_{\eta}(i)$
and
$\nbh(v, i) \subseteq V(i)$.
We say that node $v$ \emph{wins} in tournament $i$ under $\eta$ if
$X^{v}(i) > X^{w}(i)$
for all
$w \in \nbh_{\eta}(v, i)$
and if $v$ is not excluded in tournament $i$;
likewise, we say that node $v$ \emph{wins} in tournament $i$ under
$\pkaks(\eta)$ if
$Y^{v}(i) > Y^{w}(i)$
for all
$w \in \nbh(v, i)$
and if $v$ is not deleted in tournament $i$.
The following lemma plays a key role in showing that the number of tournaments
in $\pkaks(\eta)$ is at least as large as that in $\eta$.

\begin{lemma}
If
$v \in V(i)$
wins in tournament $i$ under $\pkaks(\eta)$, then $v$ wins in tournament $i$
under $\eta$.
\label{lemma:  P2>fair}
\end{lemma}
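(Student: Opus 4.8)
The plan is to exploit the fact that $\pkaks(\eta)$ uses the very same coin tosses as $\eta$ and differs from it only by \emph{deleting} nodes: rules (i) and (ii) never insert nodes or edges, and the changes made by the oblivious adversary are shared between the two executions. I would first set up, by induction on the tournament index $i$, the coupling invariant that at the start of tournament $i$ we have $V(i) \subseteq V_{\eta}(i)$ and $\nbh(v, i) \subseteq \nbh_{\eta}(v, i)$ for every $v \in V(i)$. Intuitively, since $\pkaks(\eta)$ only removes nodes, its graph is a subgraph of $\eta$'s at every round, so a surviving node can only ever see a subset of the neighbors it sees in $\eta$.

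Granting this coupling, I would prove the lemma in three moves. First, since $v$ wins under $\pkaks(\eta)$ it is in particular not deleted in tournament $i$; by rule (i), an exclusion of $v$ in tournament $i$ under $\eta$ would have triggered its deletion in $\pkaks(\eta)$, so $v$ is not excluded under $\eta$, and the invariant yields $v \in V_{\eta}(i)$. Second, and this is the heart of the argument, I would show that the number of $U$-turns of every node present in tournament $i$ coincides in the two executions, i.e.\ $X^{u}(i) = Y^{u}(i)$ for all $u \in V(i)$. Third, I would transfer the strict inequalities defining ``winning'' from $\pkaks(\eta)$ to $\eta$, taking care of the neighbors of $v$ that are present in $\eta$ but absent in $\pkaks(\eta)$.

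The crux is the equality $X^{u}(i) = Y^{u}(i)$. The number of $U$-turns a node makes in a tournament is governed by its coins: it climbs the $U_0 \to U_1 \to U_2 \to \dots$ ladder on heads and leaves for $D_2$ on tails, unless it wins earlier by observing that no neighbor is in a $U$-state. Since the coins are copied, the only possible discrepancy is in \emph{when} a node becomes ``alone''. Here I would argue that, as far as $v$'s ``is some neighbor in a $U$-state?'' test during tournament $i$ is concerned, a neighbor that has dropped to $D_2$ (or been excluded to $D'$) is indistinguishable from a neighbor that $\pkaks(\eta)$ has deleted: both cease to transmit a $U$-letter from the same round on, namely immediately after their last $U$-turn, which is exactly when rules (i) and (ii) schedule the deletion. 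The one way this indistinguishability could fail is if a dropped neighbor re-entered the $U$-states (restarted a tournament) while $v$ is still in tournament $i$; this is ruled out by the already-noted observation that $D_2$ is delayed by every $U_j$, so no neighbor of $v$ can enter tournament $i + 1$ while $v$ is still in tournament $i$. Hence every node's winning/dropping turn, and therefore its $U$-turn count, is identical in the two executions.

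Finally I would combine these. For a common neighbor $w \in \nbh(v, i) \subseteq \nbh_{\eta}(v, i)$, the winning condition under $\pkaks(\eta)$ gives $Y^{v}(i) > Y^{w}(i)$, which by the equality of counts becomes $X^{v}(i) > X^{w}(i)$. For an ``extra'' neighbor $w \in \nbh_{\eta}(v, i) \setminus \nbh(v, i)$ --- one deleted in $\pkaks(\eta)$ strictly before tournament $i$ --- that deletion corresponds to $w$ having become passive or excluded in $\eta$ in an earlier tournament, so tournament $i$ does not exist for $w$ under $\eta$ and $X^{w}(i) = 0 < X^{v}(i)$. Together with $v$ not being excluded, this establishes $X^{v}(i) > X^{w}(i)$ for all $w \in \nbh_{\eta}(v, i)$, i.e.\ $v$ wins in tournament $i$ under $\eta$. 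I expect the main technical obstacle to be making the indistinguishability argument fully rigorous at the level of the transition function --- in particular, verifying that the delay rules make the moment of ``becoming alone'' depend only on the coins and on which neighbors transmit $U$-letters, and that rule (ii), whose triggering condition itself refers to $\pkaks(\eta)$, is well-founded (no circularity) once the comparison is organized as an induction on $i$.
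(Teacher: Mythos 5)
Your overall architecture mirrors the paper's proof (couple the executions through the shared coins, compare $U$-turn counts, transfer the winning inequalities, and dispose of neighbors in $\nbh_\eta(v,i) \setminus \nbh(v,i)$ via $X^w(i) = 0$), but the claim you call the crux --- $X^u(i) = Y^u(i)$ for \emph{every} $u \in V(i)$ --- is false, and the indistinguishability argument you give for it misreads rules (i) and (ii). Those rules schedule the deletion of $u$ in $\pkaks(\eta)$ immediately after $u$'s last $U$-turn \emph{under $\pkaks(\eta)$}, i.e., after the coin-determined end of its run, not at the round in which $u$ is excluded under $\eta$. Exclusion is an exogenous event ($u$ reads the letter $S$ because an edge to a freshly inserted node appeared, or $u$ is deleted outright), so in $\eta$ it can cut $u$'s $U$-run short well before its coins would have shown tails, whereas in $\pkaks(\eta)$ the same node keeps transmitting $U$-letters and competing until its natural last $U$-turn. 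Hence an excluded $u$ typically has $X^u(i) < Y^u(i)$, and the two executions are genuinely \emph{not} view-equivalent: a neighbor of such a $u$ can become ``alone'' strictly earlier in $\eta$ than in $\pkaks(\eta)$. This asymmetry is not something to be argued away --- it is the entire point of the auxiliary execution, which exists precisely so that the counts $Y^u(i)$ are clean geometric variables unaffected by exclusions (which is what Lemma~\ref{lemma: edgesRemoved} needs); if the deletion times coincided with the exclusion times, $\pkaks(\eta)$ would simply equal $\eta$ and buy nothing.

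Fortunately your error is one-sided, and the repair is exactly the paper's proof: what holds, and what suffices, is $X^w(i) \leq Y^w(i)$ for every $w$ for which tournament $i$ exists in both executions --- with equality when $w$ is \emph{not} excluded (then the count is determined by the shared coins) and only ``$\leq$'' when $w$ is excluded (exclusion truncates the run that $\pkaks(\eta)$ lets finish). Since a winner $v$ under $\pkaks(\eta)$ is by definition not deleted in tournament $i$, it is not excluded under $\eta$, so $X^v(i) = Y^v(i) > Y^w(i) \geq X^w(i)$ for all $w \in \nbh(v,i)$; your treatment of the extra neighbors in $\nbh_\eta(v,i) \setminus \nbh(v,i)$ and of tournament existence then goes through unchanged. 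In short: keep your coupling invariant and your final transfer step, but replace the exact-equality claim by this one-sided domination --- as written, that middle step would fail, even though the inequality chain you want to conclude survives the weakening.
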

\begin{proof}
	First, we show that $X^w(i) \leq Y^w(i)$ for any tournament $i$ that exists in both $\eta$ and $\pkaks(\eta)$ for any node $w$. 	
	We observe that if $w$ is not excluded in tournament $i$, the number of $U$-turns is uniquely determined by the coin tosses in both executions, i.e., $X^w(i) = Y^w(i)$. 
	Then, if $w$ is excluded in tournament $i$, there are at least as many $U$-turns in $\pkaks(\eta)$ as in $\eta$. 
	Thus, $X^w(i) \leq Y^w(i)$. 

	Assume then that some node $v$ wins in tournament $i$ according to $\pkaks(\eta)$. 
	Since $v$ can only win in tournament $i$ if it is not excluded in this tournament, we get that $X^v(i) = Y^v(i)$. 
	Node $v$ winning in tournament $i$ implies $X^v(i) = Y^v(i) > Y^w(i) \geq X^w(i)$ for any $w \in \nbh(v, i)$ and thus, $v$ also wins in $\eta$.
	
	To conclude the proof, we observe that if tournament $i$ does not exist for node $v$ in $\pkaks(\eta)$, then either $v$ was deleted before entering state $D_1$ of tournament $i$ or either $v$ or some of its neighbors won in tournament $j < i$.
	In all aforementioned cases, tournament $i$ also does not exist for $v$ in $\eta$.
	Finally, by construction, if tournament $i$ does not exist for $v$ in $\eta$, it also does not exist for $v$ in $\pkaks(\eta)$ and the claim follows.
\end{proof}

\subsection{Runtime of \Yuvals{}}
\label{sec: analysisProportional}
Next, we analyze the number of tournaments in $\pkaks(\eta)$. 
Once we have a bound on the number of tournaments executed in $\pkaks(\eta)$, it is fairly easy to obtain the same bound for $\eta$. 
Similarly as before, the set of nodes for which tournament $i$ exists according to $\pkaks(\eta)$ is denoted by $V(i)$. 
Furthermore, let $\nbh(v, i)$ be the set of neighbors of $v$ for which tournament $i$ exists in $\pkaks(\eta)$ and let $E(i) = \bigcup_{v \in V(i)} \{ \{v, w\} \mid w \in \nbh(v, i) \}$ and $G(i) = (V(i), E(i))$. 
Notice that $G(i)$ is defined for the sake of the analysis and the topology of the underlying graph does not necessarily correspond to $G_i$ in any round round $i$, where $G_i$ is the $i$th element in the adversarial graph sequence.

According to the design of \Yuvals{}, no node that once enters a passive state can enter an active state of \Yuvals{}. 
Furthermore, if any edge is added adjacent to node $v$ after $v$ has transitioned out of state $S$, node $v$ will not participate in any tournament after the addition.
Therefore, we get that $V(i + 1) \subseteq V(i)$ and $E(i + 1) \subseteq E(i)$ for any $i$. 
The following lemma plays a crucial role in the runtime analysis of \Yuvals{}. 
\begin{IntuitionSpotlight}
In the case of no topology changes, the proof is (almost) the same as the one
in~\cite{Emek2013}.
By a delicate adjustment in the details of the proof, we show that the same
argument holds in the presence of topology changes and only affects the
constant $\ell$ by a constant factor.
\end{IntuitionSpotlight}

\begin{lemma}
	There are two constant $0 < p, \ell < 1$ such that $|E(i+1)| \leq \ell |E(i)|$ with probability $p$.
	\label{lemma: edgesRemoved}
\end{lemma}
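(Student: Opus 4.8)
The plan is to recast tournament $i$ of $\pkaks(\eta)$ as a single Luby-style selection step on the (static) tournament graph $G(i) = (V(i), E(i))$ and to show that a constant fraction of $E(i)$ is destroyed in expectation; a reverse-Markov argument then upgrades this to the claimed constant-probability bound. I would fix the configuration at the start of tournament $i$, so that $G(i)$ is determined, and regard the fresh coin tosses of the tournament as assigning to each $v \in V(i)$ an essentially independent geometric variable $Y^v(i)$ with $\Pr[Y^v(i) \ge k] = 2^{-k}$, the number of fair-coin heads $v$ tosses in its $U$-turns before first flipping tails, with $v$ winning iff it is the strict maximum of this value over $\nbh(v,i)$ (and is not deleted). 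By the design of \Yuvals{}, a node $v$ leaves the active set whenever it lies in the closed $G(i)$-neighborhood $N[S]$ of the winner set $S = \{w \in V(i) : w \text{ wins tournament } i\}$: the winner enters $W$ and each of its neighbors either enters $L$ or is deleted, so in all cases $v \notin V(i+1)$. Since $E(i+1) \subseteq E(i)$, every edge incident to $N[S]$ disappears, giving $|E(i)| - |E(i+1)| \ge |\{e \in E(i) : e \text{ incident to } N[S]\}|$, and it suffices to lower bound the expected size of the latter set.

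For the expectation bound I would import the good-node/good-edge argument of Luby, which is precisely the route taken in~\cite{Emek2013}. Call $v \in V(i)$ \emph{good} if at least a third of its $G(i)$-neighbors have degree at most $\deg_{G(i)}(v)$, and call an edge good if it has a good endpoint; a purely combinatorial counting argument then shows that at least $|E(i)|/2$ edges are good. The probabilistic core is the estimate $\Pr[v \in N[S]] \ge \gamma$ for every good $v$ and an absolute constant $\gamma > 0$: each low-degree neighbor $w$ of $v$ wins with probability $\Theta(1/\deg_{G(i)}(w)) = \Omega(1/\deg_{G(i)}(v))$ since it must be the strict maximum of $\deg_{G(i)}(w)+1$ i.i.d. geometrics, and an inclusion-exclusion lower bound over the $\ge \deg_{G(i)}(v)/3$ low-degree neighbors — dropping the pairwise terms for adjacent pairs and bounding the rest — shows the union of these winning events has constant probability. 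As $v \in N[S]$ forces every edge at $v$ to vanish, each good edge is removed with probability at least $\gamma$, and linearity of expectation over the good edges yields $\Ex[\,|E(i)| - |E(i+1)|\,] \ge (\gamma/2)\,|E(i)|$.

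To pass from expectation to probability I would set $W = |E(i+1)|$, which satisfies $0 \le W \le |E(i)|$ and $\Ex[W] \le (1 - \gamma/2)\,|E(i)|$. Markov's inequality with threshold $\ell\,|E(i)|$ gives $\Pr[W > \ell\,|E(i)|] \le (1 - \gamma/2)/\ell$; taking $\ell = 1 - \gamma/4 \in (0,1)$ makes this ratio strictly below $1$, so $|E(i+1)| \le \ell\,|E(i)|$ holds with probability at least $p := (\gamma/4)/(1 - \gamma/4) > 0$, which is the desired conclusion with $0 < p, \ell < 1$.

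Finally, the topology-change adaptation. The only features distinguishing $\pkaks(\eta)$ from the static setting are that edge insertions have already excluded the affected nodes (so $G(i)$ is well defined and $V(i+1) \subseteq V(i)$, $E(i+1) \subseteq E(i)$, as established before the lemma) and that nodes may be \emph{deleted} during tournament $i$, immediately after their last $U$-turn. The point I would stress is that these deletions are monotone edge removals disjoint from the bookkeeping above: the lower bound counts only edges incident to $N[S]$, and any edge additionally erased by a deletion can only decrease $|E(i+1)|$ further, while the win condition is defined through the fixed initial neighborhood $\nbh(v,i)$ and the predetermined values $Y^w(i)$, so a competitor deleted mid-tournament merely drops out and can only \emph{raise} $\Pr[v \in N[S]]$. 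I expect the main obstacle to be exactly the good-node estimate of the second paragraph: one must check that the strict-maximum-of-geometrics computation and the inclusion-exclusion survive the slight loss of local synchrony caused by the topology changes, tracking how the off-by-one slack in the circular $U$-delay logic degrades the constants. As the intuition spotlight anticipates, this degradation is by a constant factor only, so $\gamma$ — and hence $\ell$ and $p$ — remain absolute constants in $(0,1)$.
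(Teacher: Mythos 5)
Your overall route---Luby-style good nodes, a constant win probability contributed by each low-degree neighbor, linearity of expectation over the good edges, then Markov to convert the expectation bound into a constant-probability bound---is the same skeleton as the paper's proof, and both your static estimate and your final Markov step are fine. The genuine gap is in your last paragraph, where you dismiss the mid-tournament deletions of $\pkaks(\eta)$ with the claim that ``a competitor deleted mid-tournament merely drops out and can only raise $\Pr[v \in N[S]]$.'' That monotonicity claim is false for exactly the events your expectation bound relies on. For a good node $v$, the constant lower bound on $\Probability[v \in N[S]]$ comes from the union of the events ``some low-degree neighbor $w$ of $v$ wins,'' and each such event requires $w$ itself to survive the tournament (your own definition: winning means strict maximum \emph{and} not deleted). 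Deleting a competitor of $v$ indeed helps $v$ win, but deleting a potential \emph{coverer} of $v$ hurts. The construction of $\pkaks(\eta)$ deletes precisely the nodes that are excluded or become passive in $\eta$, and nothing prevents these deletions from wiping out all of $v$'s low-degree neighbors, in which case $\Probability[v \in N[S]]$ can drop to $\bigO(1/d(v,i))$ rather than a constant. Note also that these deletions are not oblivious noise you can condition away: they are determined by the same coin tosses as the tournament itself.

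The missing idea---and the one place where the paper's argument genuinely departs from the static analysis of \cite{Emek2013}---is a dichotomy for each good node $v$. Either the deletions of tournament $i$ remove $v$ itself or at least half of the edges between $v$ and its low-degree neighbors, in which case at least $d(v,i)/6$ of $v$'s edges are absent from $E(i+1)$ \emph{with probability $1$} (edges incident to deleted nodes cannot survive; this is the ``saving grace'' you gesture at with ``edges erased by a deletion only decrease $|E(i+1)|$ further'' but never actually combine with the probabilistic part); or at least $d(v,i)/6$ low-degree neighbors survive the whole tournament, and then the win-probability computation goes through on the survivors, giving $\Probability[v \notin V(i+1)] \geq \frac{d(v,i)}{6}\cdot\frac{1}{2d(v,i)}\cdot\frac{2}{3} = \frac{1}{18}$ (the paper makes the winning events disjoint by requiring $w$ to beat all of $\nbh(w,i)\cup\hnbh(v,i)$, so it can simply add probabilities instead of your inclusion-exclusion). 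Either branch removes a constant expected fraction of the edges at $v$, after which your Markov step finishes the proof. Incidentally, the difficulty you flag---loss of local synchrony degrading the geometric-maximum computation---is a red herring: $\pkaks(\eta)$ is constructed so that a node is deleted cleanly immediately after its last $U$-turn, so the per-node turn counts remain independent geometrics; the only real issue is the one above.
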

\begin{proof}
Let $d(v, i) = |\nbh(v, i)|$. 
We say that a node $v$ is \textit{good} in $G(i)$ if 
\[
	|\{ w \in \nbh(v, i) \ | \ d(v, i) \geq d(w, i) \}| \geq d(v, i) / 3 \ .
\] 
It is known, that at least half of the edges of any graph are adjacent to good nodes~\cite{ABI86}.	
For the following part of the proof, we use techniques similar to the ones in~\cite{Emek2013}.
The main idea is that a good node is covered by a node with a smaller degree with a constant probability in any tournament and thus, a constant fraction of the edges is removed in each tournament.
The subtle difference is that in the case of a dynamic network, there is no guarantee that a node adjacent to a good node remains adjacent until the end of the tournament.
We tackle this by considering the auxiliary execution $\pkaks(\eta)$ and showing that the adversary essentially has to either delete a constant fraction of the edges in the graph or otherwise, we can use the previous techniques.

Let us now consider some good node $v \in G(i)$.	
Our goal is to show that at least one sixth of the edges connected to $v$ in tournament $i$ do not exist in $V(i + 1)$ with a constant probability. 
Recall that we are now considering the auxiliary execution $\pkaks(\eta)$ in which nodes are deleted in the end of a tournament if they are excluded in this tournament according to $\eta$.
We split our analysis into two cases, where the first case considers the option that in tournament $i$, the adversary either causes the deletion of $v$ or of at least half of the edges between $v$ and the nodes in $\{ w \in \nbh(v, i) \ | \ d(v, i) \geq d(w, i) \}$.
Notice that the adversary can do this by either directly deleting the edges, by deleting nodes in $\{ w \in \nbh(v, i) \ | \ d(v, i) \geq d(w, i) \}$ or by adding edges adjacent to nodes in these sets, which results in the deletion of the corresponding nodes in the auxiliary execution.
It follows that at least one sixth of edges connected to $v$ do not exist in $V(i + 1)$ with probability $1$.
	
Let us then assume for the second case that the adversary causes the deletion at most half of the edges connected to 
\[
	\{ w \in \nbh(v, i) \ | \ d(v, i) \geq d(w, i) \}
\]
in tournament $i$. 
Let $A(w, i)$ be the event that $w$ is not deleted in tournament $i$ and let
	\[
		\hnbh(v, i) =  \{ w \in \nbh(v, i) \ | \ A(w, i) \land \left( d(w, i) \geq d(v, i) \right) \}  \ .
	\] 
We say that node $w \in \hnbh(v, i)$ wins $v$ in tournament $i$ if
	\[
		X^{w}(i) > \max\{ X^{w}(i)\ |\ w \in \nbh(w, i) \cup \hnbh(v, i) - \{ w \} \}
	\]
	and denote this event by $\WIN_i(w, v)$. The idea is that if $w$ wins $v$ in tournament $i$, then $w$ enters the W state and $v$ enters the L state. Furthermore, the events $\WIN_i (w, v)$ and $\WIN_i (z, v)$ are disjoint for every $w, z \in \hnbh(v, i), w \neq z$.

	Let us now fix a node $w \in \hnbh(v, i)$. We denote the event that the maximum of $\{ X^{z} |\ z \in \nbh(w, i) \cup \hnbh(v, i) \}$ is attained at a single $z \in \nbh(w, i) \cup \hnbh(v, i)$ by $B_i (w, v)$. Since 
\[
	| \nbh(w, i) \cup \hnbh(v, i) | \leq 2d(v, i)
\]
by the definition of a good node and $X^{z} (i)$ are independent geometric random variables for all $z \in \nbh(w, i) \cup \hnbh(v, i)$, we get that
	\[
		\Probability( \WIN_i (w, v) ) = \Probability ( \WIN_i( w, v)\ |\ B_i (w, v)) \cdot \Probability( B_i (w, v)) \geq \frac{1}{2d(v, i)} \cdot \frac{2}{3} \ .
	\]
	
	Since $v$ is good in $G(i)$ and the events $\WIN_i (w, v)$ disjoint, we get that
	\[
		\begin{split}
			\Probability \left( v \notin V(i + 1) \right) 	& \geq \Probability \left( \bigvee_{w \in \hnbh(v, i)} \WIN_{i}(w, v) \right) \\
															& = \sum_{w \in \hnbh(v, i)} \Probability \left( \WIN_{i}(w, v) \right) \\
															& \geq \frac{d(v, i)}{6} \cdot \frac{1}{2d(v, i)} \cdot \frac{2}{3} = \frac{1}{18} \, .
		\end{split}
	\]
	Recalling that half of the edges are connected to good nodes, we get that 
\[
	\Expectation[|E(i + 1)|] < \frac{35|E(i)|}{36}  \ .
\]
The claim now follows by Markov's inequality.	
\end{proof}

\begin{theorem}
	Any node participates in $\bigO(\log n)$ tournaments before becoming passive \whp and in expectation.
	\label{thm: NrTournaments}
\end{theorem}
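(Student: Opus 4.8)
The plan is to carry out the entire analysis in the auxiliary execution $\pkaks(\eta)$, where the monotonicity $E(i+1) \subseteq E(i)$ (already recorded before the statement) and Lemma~\ref{lemma: edgesRemoved} are available, and then to transfer the resulting bound back to $\eta$ via Lemma~\ref{lemma:  P2>fair}. First I would collect the two deterministic facts I need. The edge set is non-increasing across tournaments, and the number of edges at the first tournament is polynomially bounded, $|E(1)| \le \binom{n}{2} < n^2$, so that $\log|E(1)| = \bigO(\log n)$. Second, once $|E(i)| = 0$ the graph $G(i)$ has no edges, hence every node still present in $V(i)$ has no active neighbor, wins its current tournament, and becomes passive; therefore the number of tournaments any node plays is at most the first index $i^\ast$ with $|E(i^\ast)| = 0$.

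Next I would phrase the probabilistic core as a counting argument over ``successful'' tournaments. Call tournament $i$ a \emph{success} if $|E(i+1)| \le \ell\,|E(i)|$. Lemma~\ref{lemma: edgesRemoved} guarantees that, conditioned on the configuration at the start of tournament $i$ (and hence on the entire history, since the adversary is oblivious and the coins of tournament $i$ are fresh), this event occurs with probability at least the constant $p$. Consequently, after $T := \lceil \log_{1/\ell}(n^2)\rceil + 1 = \bigO(\log n)$ successes the edge count has dropped below $1$, i.e.\ to $0$, so by the previous paragraph all nodes are passive. The number of tournaments is therefore at most the number needed to accumulate $T$ successes.

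The heart of the argument is to bound this waiting time. Because each tournament is a success with conditional probability at least $p$ regardless of the past, the number of tournaments required to collect $T$ successes is stochastically dominated by a sum of $T$ independent $\mathrm{Geom}(p)$ random variables. This sum has expectation $T/p = \bigO(\log n)$, which yields the expectation bound directly, and it concentrates: setting $k = \lceil 4T/p\rceil = \bigO(\log n)$, the event that more than $k$ tournaments occur implies that fewer than $T$ successes occurred among the first $k$ tournaments, whose probability is at most $\Pr[\mathrm{Bin}(k,p) < T] \le e^{-\Omega(T)} = n^{-\Omega(1)}$ by a Chernoff bound; taking the constant in $k$ large enough forces this probability below $n^{-c}$, giving the \whp bound. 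Finally I would invoke Lemma~\ref{lemma:  P2>fair}, together with the accompanying observation that the number of tournaments in $\pkaks(\eta)$ is at least that in $\eta$, to conclude that every node plays $\bigO(\log n)$ tournaments in the real execution $\eta$ as well, both \whp and in expectation.

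The main obstacle I anticipate is justifying the stochastic domination rigorously: I must establish that the per-tournament success probability is at least $p$ \emph{conditioned on all randomness revealed so far}, not merely marginally. This is exactly where the oblivious-adversary assumption is essential, since it makes the graph $G(i)$ at the start of tournament $i$ a deterministic function of the coins tossed in tournaments $1,\dots,i-1$ and of the fixed adversarial schedule, so that Lemma~\ref{lemma: edgesRemoved}, which bounds the success probability given $G(i)$, applies conditionally. A secondary point to verify carefully is the base bound $|E(1)| = \bigO(n^2)$ under node and edge insertions; here one uses that a freshly inserted node is isolated and that an inserted edge incident to an active node forces exclusion (hence deletion in $\pkaks$), so that only polynomially many edges are ever relevant to the shrinking process.
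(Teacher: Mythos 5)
Your proposal is correct and follows essentially the same route as the paper's own proof: work in the auxiliary execution, bound the first tournament index at which $|E(i)| = 0$ using Lemma~\ref{lemma: edgesRemoved} (your sum of $T = \bigO(\log n)$ independent $\mathrm{Geom}(p)$ waiting times is exactly the paper's negative-binomial domination), observe that all remaining active nodes then win and become passive, and transfer the bound back to $\eta$ via Lemma~\ref{lemma:  P2>fair}. The only differences are presentational: you spell out the Chernoff concentration, the oblivious-adversary conditioning behind the stochastic domination, and the base bound $|E(1)| = \bigO(n^2)$, all of which the paper asserts implicitly.
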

\begin{proof}
	Let $Z = \min\{ 0 \leq i \in \mathbb{Z} \ | \ |E(i)| = 0 \}$. 
By Lemma~\ref{lemma: edgesRemoved}, $Z$ is dominated by a random variable that follows distribution 
\[
	\bigO(\log n) + \textrm{NB}(\bigO(\log n), 1 - p) \ ,
\] 
where \textrm{NB} stand for the negative binomial distribution with failure probability $1 - (1 - p) = p$ and $\bigO(\log n)$ as the number of failed trials until the process is stopped.
Put otherwise, $\textrm{NB}(\bigO(\log n), 1 - p)$ bounds the number of tournaments needed until the number of edges in our input graph is reduced by a constant factor $\bigO(\log n)$ times.
Therefore, $Z = \bigO(\log n)$ in expectation and \whp{}
	
	According to the design of the algorithm, a node that is not excluded and does not have any active neighbors in tournament $i$ goes into state $W$ with probability $1$. 
Therefore, all active nodes in tournament with index $Z$ will either become passive or be deleted. 
Then, by applying Lemma~\ref{lemma:  P2>fair} and observing that no new edges can be added between nodes in active states of \Yuvals{}, we get that there are no nodes for which tournament $Z + 1$ exists in $\eta$.
\end{proof}

The last step of the analysis of \Yuvals{} is to bound the number of rounds that any node $v$ spends in an active state. 
Let $V_{\textrm{MAX}}(v)$ denote the maximal connected component of nodes in active states of \Yuvals{} that contains node $v$.
Consider the following modification of \Yuvals{}: before starting tournament $i + 1$, node $v$ waits for every other node in $V_\textrm{MAX}(v)$ to finish tournament $i$ or to become excluded. 
Notice that any node that gets inserted into the graph after node $v$ has exited state $S$ will not be part of $V_\textrm{MAX}(v)$.
We do not claim that we know how to implement such a modification, but clearly the modified process is not faster than the original one. 

The length of tournament $i$ with respect to the above modification is determined by $\max_v \{ X^v(i) \}$. Given that the random variables $X^v(i)$ are independent and follow the $\textrm{Geom}(1/2)$ distribution, we get that $\max_v \{ X^v(i) \} \in \bigO(\log n)$ with high probability and in expectation.
Combining with Observation~\ref{obs: non-affectedNotGreedy}, we get the following corollaries.
\begin{corollary}
	Let $t_v \geq 1$ be the round in which node $v$ is inserted into the graph, where $t_v = 1$ if $v \in G_1$.
	If $v$ is not deleted, it enters either state $W$ or $L$ by time  $t_v + \bigO(\log^2 n)$ \whp and in expectation.
	\label{cor: FairRunTime}
\end{corollary}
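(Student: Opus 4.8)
The plan is to read the desired $\bigO(\log^2 n)$ bound as a product of two quantities that the preceding development has already pinned down: the number of tournaments that $v$ lives through, and the number of rounds consumed by a single tournament. I would carry out the whole argument inside the slowed-down variant of \Yuvals{} introduced just before the corollary, in which every node of $V_{\textrm{MAX}}(v)$ waits for its component-mates to finish (or become excluded) before opening the next tournament. Since that modification never accelerates the execution, any upper bound I establish for it is also an upper bound for the real execution, so it suffices to bound $v$'s active time in the modified process.

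For the first factor, Theorem~\ref{thm: NrTournaments} bounds the number $Z$ of tournaments $v$ participates in before becoming passive by $\bigO(\log n)$, both \whp{} and in expectation; in the modified process $v$ stays active through the tournaments of its whole component, but this is still at most $Z$. For the second factor, the length of tournament $i$ in the modified process equals $\max_{w}\{X^{w}(i)\}$ up to the $\bigO(1)$ rounds spent in the enclosing $D_1$- and $D_2$-turns. As the $X^{w}(i)$ are independent $\textrm{Geom}(1/2)$ variables over at most $n$ nodes, a routine maximum-of-geometrics estimate gives $\max_{w}\{X^{w}(i)\} = \bigO(\log n)$ \whp{} and in expectation, exactly as stated above the corollary. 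Writing $T_v$ for $v$'s active time, we thus have $T_v \le \sum_{i=1}^{Z} L_i$ with $Z = \bigO(\log n)$ and every $L_i = \bigO(\log n)$.

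The step I expect to require the most care is combining these two ``\whp{} and in expectation'' bounds into a single bound on their product. The \whp{} part is painless: a union bound over the $\bigO(\log n)$ tournaments shows that all of them have length $\bigO(\log n)$ simultaneously with probability $1 - n^{-\Omega(1)}$, and on that event $T_v = \bigO(\log^2 n)$. The expectation part is the real obstacle, since $\Expectation[Z]\cdot \Expectation[\max_i L_i]$ does not immediately control $\Expectation[T_v]$. Here I would lean on the light tails of both ingredients --- $Z$ is dominated by $\bigO(\log n) + \textrm{NB}(\bigO(\log n), 1-p)$ and $\max_i L_i$ by a maximum of geometrics, both sub-exponential with mean $\bigO(\log n)$ and second moment $\bigO(\log^2 n)$ --- and conclude via a Cauchy--Schwarz bound $\Expectation[T_v] \le \Expectation[Z \cdot \max_i L_i] \le \sqrt{\Expectation[Z^2]\,\Expectation[(\max_i L_i)^2]} = \bigO(\log^2 n)$, or, equivalently, by truncating at the high-probability event whose complement carries probability $n^{-\Omega(1)}$ and contributes only $o(1)$ to the expectation.

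Finally I would translate the active-time bound into the statement about output states. Node $v$ is continuously active from round $t_v$ onward and becomes passive exactly when it first enters $W$ or $L$, so $T_v = \bigO(\log^2 n)$ means $v$ enters $W$ or $L$ by round $t_v + \bigO(\log^2 n)$; that it does enter a passive state (rather than looping forever) is guaranteed by the proof of Theorem~\ref{thm: NrTournaments}, where in the final edge-free tournament every surviving active node has no active neighbor and deterministically moves to $W$. Observation~\ref{obs: non-affectedNotGreedy} is what lets this \Yuvals{}-only accounting stand for the nodes of interest, since it certifies that they never slip into an active state of the greedy component and thereby restart the clock.
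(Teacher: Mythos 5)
Your proposal is correct and follows essentially the same route as the paper's own argument, which is exactly the paragraph preceding the corollary: pass to the slowed-down variant of \Yuvals{} where nodes wait for their component $V_{\textrm{MAX}}(v)$, bound the number of tournaments by Theorem~\ref{thm: NrTournaments}, bound each tournament's length by the maximum of at most $n$ independent $\textrm{Geom}(1/2)$ variables, and invoke Observation~\ref{obs: non-affectedNotGreedy} to keep the accounting within \Yuvals{}. The only difference is that you explicitly justify turning the two ``\whp{} and in expectation'' bounds into a bound on their product (via a union bound plus Cauchy--Schwarz/truncation), a step the paper leaves implicit.
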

\begin{corollary}
	The maximum runtime of any non-affected node is $\bigO(\log^2 n)$ in expectation and \whp{}
	\label{cor: non-affected}
\end{corollary}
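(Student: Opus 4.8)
The plan is to use the definition of a node's (local) runtime as the number of rounds in which it is \emph{active} (i.e., resides in a non-output state), and to argue that for a non-affected node these active rounds constitute a single prefix of its lifetime whose length is already controlled by Corollary~\ref{cor: FairRunTime}. Concretely, I would show that a non-affected node passes through active states only until it first reaches $W$ or $L$ and then remains passive forever, so that the desired $\bigO(\log^2 n)$ bound is inherited directly from the $t_v + \bigO(\log^2 n)$ guarantee of Corollary~\ref{cor: FairRunTime}.

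First I would observe that a non-affected node $v$ is never deleted: deleting $v$ or any neighbor of $v$ removes an edge incident to $v$, which would render $v$ affected; the remaining corner case of a node that stays isolated throughout its existence is trivial, since such a node reaches $W$ within a constant number of rounds and hence has $\bigO(1)$ runtime. Thus Corollary~\ref{cor: FairRunTime} applies to $v$, guaranteeing that it enters $W$ or $L$ within $\bigO(\log^2 n)$ rounds of its insertion, both in expectation and \whp. Since $v$ is active in every round from its insertion until this point, and these are at most $\bigO(\log^2 n)$ many, it remains only to verify that $v$ accrues no further active rounds afterwards.

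Next I would establish that once a non-affected node reaches $W$ or $L$ it stays passive forever. By the design of \Yuvals{}, no node that has entered a passive state can re-enter an active state of \Yuvals{}, so $v$ cannot return to an active state of the proportional component. By Observation~\ref{obs: non-affectedNotGreedy}, a non-affected node is never in an active state of \greedy{}; indeed, the proof of that observation shows that for a non-affected node the conditions triggering a transition out of $W$ or $L$ (into the active state $D'$) are never satisfied, so $v$ never leaves its output state. Combining the two, the set of rounds in which $v$ is active is exactly the prefix preceding the first round in which $v$ reaches $W$ or $L$.

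I expect the only (minor) obstacle to be the bookkeeping that rules out any additional active rounds after the output state is reached; this reduces entirely to invoking the absorbing behavior of $W$ and $L$ for non-affected nodes guaranteed by Observation~\ref{obs: non-affectedNotGreedy}, together with the one-way passage from active to passive states of \Yuvals{}. With these two facts in hand, the maximum runtime of any non-affected node is at most the $\bigO(\log^2 n)$ bound of Corollary~\ref{cor: FairRunTime}, in expectation and \whp{}
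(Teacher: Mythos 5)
Your proposal is correct and follows essentially the same route as the paper, which obtains the corollary by combining Corollary~\ref{cor: FairRunTime} (the $\bigO(\log^2 n)$ bound on the time to reach $W$ or $L$) with Observation~\ref{obs: non-affectedNotGreedy} (a non-affected node never becomes active in \greedy{}, and its output state is absorbing). Your additional check that a non-affected node cannot be deleted (except in the trivial always-isolated case), so that the precondition of Corollary~\ref{cor: FairRunTime} is satisfied, is a point the paper leaves implicit, but it does not change the argument.
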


\subsection{The Quality of an MIS}
\label{sec: quality}
Before we go to the analysis of \greedy{}, we want to point out that even though an invalid configuration can be detected locally, a single topology change can have an influence in an arbitrary subgraph.
\begin{IntuitionSpotlight}
To demonstrate this observation, consider a graph $G$ with some node $v$ that
is connected to every other node in some subgraph $G'$ of $G$ and assume, that after executing
\Yuvals{}, the MIS contains node $v$.
Given that the adversary deletes node $v$, the MIS has to be fixed on $G'$, that can have an arbitrary topology, e.g., a large diameter.
Furthermore, several deletions like this can potentially happen one after another in a set of subgraphs of $G$ that are connected,
which can result in some neighboring nodes to start the execution of the fixing component at different times.
\end{IntuitionSpotlight}

We begin the analysis of \greedy{} by taking a closer look at the properties of non-affected nodes in the passive states and show that if a node has a high degree, it is either unlikely for this node to be in the MIS or that the neighbors of this node have more than one MIS node in their neighborhoods.
Let $i$ be the index of the tournament in which node $v$ enters state $W$. 
We say that node $v$ \textit{covers} node $w \in \nbh(v, i) \cup \{ v \}$ if $w$ entered state $L$ in tournament $i$.  
\begin{definition}
	The quality $q(v)$ of node $v$ is given by $q(v) = |\{ w \in \nbh(v, i) \cup \{ v \} \mid v \text{ covers } w \}|$.
	The quality $q(B)$ of any set of nodes $B$ is defined as $\sum_{v \in B} q(v)$.
\end{definition}


\begin{lemma}
	For any node $v$, $\Expectation[q(v)] \in \bigO(\log n)$.
	\label{lemma: quality}
\end{lemma}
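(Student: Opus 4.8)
The plan is to play a node's degree against its probability of winning a tournament: a node of large degree is exactly as unlikely to be the one that enters $W$, so the two effects cancel up to a constant, and summing over the $\bigO(\log n)$ tournaments of Theorem~\ref{thm: NrTournaments} yields the bound. First I would note that $q(v)$ is nonzero only if $v$ enters state $W$ in some \Yuvals{} tournament; let $i$ denote this (random) tournament. Every node counted by $q(v)$ belongs to $\nbh_\eta(v, i)$, so $q(v) \leq |\nbh_\eta(v, i)|$. Writing $\mathbb{1}_i$ for the indicator of the event ``$v$ enters $W$ in tournament $i$'' (equivalently, $v$ wins tournament $i$ under $\eta$), these events are disjoint across $i$, hence
\[
	\Expectation[q(v)] = \sum_{i \geq 1} \Expectation[q(v) \cdot \mathbb{1}_i] \leq \sum_{i \geq 1} \Expectation\bigl[ |\nbh_\eta(v, i)| \cdot \mathbb{1}_i \bigr] \, .
\]

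Next I would bound each summand by conditioning on the history $\mathcal{F}_{<i}$ up to the start of tournament $i$. Whether $v$ reaches tournament $i$ (i.e.\ $v \in V_\eta(i)$) and the neighborhood $\nbh_\eta(v, i)$ are both $\mathcal{F}_{<i}$-measurable, since they are fixed by the coin tosses of earlier tournaments together with the oblivious adversary's (coin-independent) schedule. Conditioned on $\mathcal{F}_{<i}$, the turn counts $X^{w}(i)$ of $v$ and of its neighbors are fresh, independent $\mathrm{Geom}(1/2)$ random variables, and $v$ wins tournament $i$ only if $X^{v}(i)$ is the strict maximum among the $|\nbh_\eta(v, i)| + 1$ exchangeable variables $\{ X^{w}(i) : w \in \nbh_\eta(v, i) \cup \{v\} \}$. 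Exchangeability gives a conditional winning probability of at most $1 / (|\nbh_\eta(v, i)| + 1)$, with exclusions and ties only lowering it further. Therefore, writing $d = |\nbh_\eta(v, i)|$,
\[
	\Expectation\bigl[ |\nbh_\eta(v, i)| \cdot \mathbb{1}_i \bigr] \leq \Expectation\left[ \mathbb{1}\{ v \in V_\eta(i) \} \cdot \frac{d}{d + 1} \right] \leq \Probability( v \in V_\eta(i) ) \, ,
\]
using $d / (d + 1) \leq 1$.

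Summing over $i$ then collapses to the expected number of tournaments in which $v$ takes part:
\[
	\Expectation[q(v)] \leq \sum_{i \geq 1} \Probability( v \in V_\eta(i) ) = \Expectation\bigl[ |\{ i : v \in V_\eta(i) \}| \bigr] = \bigO(\log n) \, ,
\]
where the final equality is exactly the content of Theorem~\ref{thm: NrTournaments}.

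The step I expect to be most delicate is the conditioning in the middle paragraph: I must argue that, given the configuration at the start of tournament $i$, the event that $v$ wins depends only on the fresh, independent $\mathrm{Geom}(1/2)$ turn counts, so that the exchangeability bound $1 / (d + 1)$ is legitimate. This is precisely where the oblivious-adversary assumption and the use of independent coins per tournament are doing the real work. I would also verify that exclusions and deletions occurring \emph{within} tournament $i$, as well as ties among the $X^{w}(i)$, can only decrease $v$'s chance of being the unique winner, so that the estimate is genuinely an upper bound; the degree--probability trade-off $d/(d+1) \le 1$ and the $\bigO(\log n)$ tournament count then finish the proof.
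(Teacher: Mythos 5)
Your high-level plan is the same as the paper's proof: per tournament, the number of nodes $v$ can cover grows with its degree while the probability that $v$ wins shrinks with the inverse degree, the two cancel to a constant, and summing over $\bigO(\log n)$ tournaments (you via Theorem~\ref{thm: NrTournaments}, the paper via the stopping index $Z$ and Lemma~\ref{lemma: edgesRemoved}) finishes the argument. The gap is exactly at the step you flag as delicate, and your proposed resolution of it is backwards. You assert that exclusions and deletions occurring \emph{within} tournament $i$ can only decrease $v$'s chance of being the unique winner. The opposite is true: deleting a competitor $w \in \nbh_\eta(v,i)$, or inserting an edge at $w$ (which sends $w$ to $D'$), removes $w$ from the competition, so $v$ only needs to outlast its \emph{surviving} neighbors. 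Concretely, if the oblivious adversary schedules the deletion of all $d = |\nbh_\eta(v,i)|$ neighbors of $v$ right after tournament $i$ begins (for concreteness take $i = 1$, whose starting round is deterministic, so an oblivious schedule can do this), then $v$ enters $W$ with probability $1$, and your middle inequality $\Expectation[\,|\nbh_\eta(v,i)| \cdot \mathbf{1}\{v \text{ enters } W \text{ in tournament } i\}\,] \leq \Probability(v \in V_\eta(i))$ fails by a factor of $d$. The exchangeability premise collapses for the same reason: the turn counts of nodes hit during the tournament are truncated and no longer i.i.d.\ $\mathrm{Geom}(1/2)$, and $v$ can enter $W$ without being the maximum of the idealized untruncated variables.

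What rescues the lemma --- and what your crude bound $q(v) \leq |\nbh_\eta(v,i)|$ throws away --- is that a covered node must itself survive tournament $i$ and enter state $L$; neighbors that are deleted or excluded mid-tournament are never covered. This is precisely why the paper's proof introduces $\hnbh(v,i)$, the set of neighbors of $v$ that are \emph{not excluded} in tournament $i$, and runs the cancellation with both quantities measured against this single set: $q(v,i) \leq |\hnbh(v,i)| + 1$, and $q(v,i) > 0$ forces $X^v(i) \geq \max\{X^w(i) \mid w \in \hnbh(v,i)\}$, where now every variable involved is a genuine untruncated independent geometric, yielding $\Expectation[q(v,i)] \leq 2$. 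Your argument goes through once you replace $\nbh_\eta(v,i)$ by $\hnbh(v,i)$ in \emph{both} the coverage bound and the winning event; as written, the degree you multiply by and the set of competitors you use to bound the winning probability are two different sets, and the adversary can drive a wedge between them.
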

\begin{proof}
	Consider node $v$ and tournament $i$. 
Let $\hnbh(v, i) \subseteq \nbh(v, i)$ be the neighbors of $v$ that are not excluded in tournament $i$. 
We note that $v$ can only cover $w \in \nbh(v, i)$ if $v$ is not excluded in tournament $i$. 
Let $q(v, i)$ denote the random variable that counts the number of nodes that $v$ covers in tournament $i$. 
Since $X^v(i)$ and $X^w(i)$ for all $w \in \nbh(v, i)$ are independent $\textrm{Geom}(1/2)$ random variables, we get that
\[
	\begin{split}
		\Expectation[q(v, i)] 	& \leq \Probability\left[X^v(i) \geq \max \{ X^w(i) \ | \ w \in \hat \nbh(v, i) \} \right] \cdot (|\hnbh_i(v)| + 1) \\
								& \leq 2 \cdot \Probability\left[X^v(i) > \max \{ X^w(i) \ | \ w \in \hat \nbh(v, i) \} \right] \cdot (|\hnbh_i(v)| + 1) \\
								& \leq \frac{2(|\hnbh(v, i)| + 1)}{|\hnbh(v, i)| + 1}  = 2 \ . 
	\end{split} 
\]

Consider the random variable $Z = \min \{ j \in \mathbb{Z}_{\geq 0} \ | \ |E(j)| = 0 \}$.
The total number of tournaments is at most $Z$ and $\Expectation[q(v, i)] = \Expectation[q(v, i) \ | \ Z \geq i] \cdot \Probability[Z \geq i]$. 
Thus, by Lemma~\ref{lemma: edgesRemoved}
\[
	\begin{split}
		\Expectation[q(v)] & = \sum_{i = 0}^{\infty} \Expectation[q(v, i) \ | \ Z \geq i] \cdot \Probability[Z \geq i] \in \bigO(1) \cdot \left( \sum_{i = 0}^{\infty} \Probability[Z \geq i] \right) \subseteq \bigO(\log n) \ . \hfill 
	\end{split}  
\] 

\end{proof}


The quality of a node $v$ gives an upper bound on the number of nodes in the neighborhood of $v$ that are only covered by $v$.
Let $v$ be a node in state $L$. If $t$ is the first round such that there are no nodes adjacent to $v$ in state $W$, we say that $v$ is \textit{released} at time $t$. 
Similarly, we say that node $v$ in state $W$ is released if an adjacent edge is added.
Every node can only be released once, i.e., node $v$ counts as released even if it eventually again has a neighboring node in state $W$ or if it enters state $W$. 
\begin{lemma}
	Let $H$ be the set of nodes that are eventually released. Then $\Expectation[|H|] \in \bigO(C \log n)$.
	\label{lemma: participate}
\end{lemma}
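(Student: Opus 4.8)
The plan is to charge every release to a topology change and to bound the expected number of releases charged to each change by $\bigO(\log n)$ via Lemma~\ref{lemma: quality}. Since each node is released at most once, it suffices to bound the total number of release events, which I split according to the state in which the release occurs: let $H_W$ (resp., $H_L$) be the nodes released while residing in state $W$ (resp., $L$), so that $H = H_W \cup H_L$. Bounding $|H_W|$ is immediate: a node in state $W$ is released only when an incident edge is inserted, and indeed the transition function forces a node out of $W$ only upon reading the letter $W$ or the initial letter $S$, both of which require an edge insertion incident to the node (by Observation~\ref{obs: non-affectedNotGreedy} the letter $S$ is only read through a newly inserted incident edge). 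As each edge insertion is incident to at most two nodes and there are at most $C$ edge insertions, we get $|H_W| \le 2C$.

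For $H_L$, I first observe that a node $v$ in state $L$ is released exactly when its last neighbor in state $W$ disappears, and I classify the cause of this disappearance: (a) the edge joining $v$ to that $W$-neighbor $w$ is deleted; (b) $w$ itself is deleted; or (c) $w$ leaves state $W$, which—as argued above—can only be triggered by an edge insertion incident to $w$. In each case the release of $v$ is attributable to a single topology change. I would charge releases of type (a) directly to the corresponding edge deletion, noting that such a deletion can be the last-$W$-neighbor removal for at most one $L$-endpoint, contributing $\bigO(C)$ in total. Releases of types (b) and (c) are charged to the disappearing $W$-node $w$ (respectively to its deletion and to the edge insertion forcing it out), and the number of $L$-nodes so released equals the number of nodes for which $w$ was the sole surviving $W$-neighbor.

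The heart of the argument—and the step I expect to be the main obstacle—is to bound this fan-out by the quality $q(w)$. A node $v$ that relied on $w$ as its only $W$-neighbor and is in state $L$ must have lost to $w$ in the tournament in which $w$ entered $W$, and is therefore covered by $w$; the delicate point is that the dynamic graph allows $v$ to have acquired $w$ as a neighbor through an edge insertion after that tournament, or to have shifted its reliance onto $w$ only after its original coverer vanished. I would handle these cases by absorbing each such extra reliance into the incident topology change that created it (an edge insertion, or the earlier disappearance of another $W$-node, which is itself one of the events already being counted), so that after this bookkeeping every $L$-release is charged either to an edge insertion/deletion (with $\bigO(1)$ fan-out) or to the disappearance of a $W$-node $w$ with fan-out at most $q(w)$. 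Crucially, the fact that each node enters $H$ at most once keeps the cascading re-competitions of \greedy{} from inflating the count, since a node released, re-stabilized, and released again contributes only a single element to $H$.

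Summing over the $\bigO(C)$ charged topology changes and applying linearity of expectation together with Lemma~\ref{lemma: quality}, which gives $\Expectation[q(w)] \in \bigO(\log n)$ for every node $w$, yields $\Expectation[|H_L|] \in \bigO(C \log n)$; combining with $|H_W| \le 2C$ gives $\Expectation[|H|] \in \bigO(C \log n)$, as claimed.
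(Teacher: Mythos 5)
Your proposal is correct and takes essentially the same route as the paper's proof: the paper simply defines $V^c$ as the set of nodes eventually deleted or incident to an inserted/deleted edge (so $|V^c| \in \bigO(C)$) and bounds $\Expectation[|H|]$ by $\Expectation[q(V^c)] = \sum_{v \in V^c} \Expectation[q(v)] \in |V^c| \cdot \bigO(\log n) \subseteq \bigO(C \log n)$ via Lemma~\ref{lemma: quality}. Your split into $H_W$ and $H_L$, the classification of how a last $W$-neighbor disappears, and the re-charging of shifted reliances back to the true coverer's disappearance are precisely the bookkeeping that the paper's terse charging argument leaves implicit.
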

\begin{proof}
	Let $V^c$ be the set of nodes that are (eventually) deleted or incident to an edge which is (eventually) added or deleted. 
By Lemma~\ref{lemma: quality}, the quality of any node is $\bigO(\log n)$. 
Therefore, $\Expectation[q(V^c)] = \sum_{v \in V^c} \Expectation[q(v)] \in |V^c| \cdot \bigO(\log n)$. 
Thus, $\Expectation[|H|] \in |V^c| \cdot \bigO(\log n) \subseteq \bigO(C \log n)$.
\end{proof}

\subsection{Fixing the MIS}
\label{sec: analysisGreedy}
We call a maximal contiguous sequence of rounds in which node $v$ resides in state $U'$ a \textit{greedy tournament}. 
Unlike with \Yuvals{}, we index these greedy tournaments by the time this particular tournament starts. 
In other words, if node $v$ resides in state $D'$ in round $t - 1$ and in state $U'$ in round $t$, we index this tournament by $t$.
We denote the random variable that counts the number of transitions from $U'$ to itself in greedy tournament $t$ by node $v$ by $X^v (t)$. 
Notice that $X^v (t)$ obeys $\textrm{Geom}(1/2)$ unless $v$ is deleted or an adjacent edge is added during greedy tournament $t$.
We say that a greedy tournament $t$ is \textit{active} if there is at least one node $v$ in state $U'$ of greedy tournament $t$.

\begin{observation}
	There are at most $\bigO(\log n)$ rounds in which greedy tournament $t$ is active in expectation and \whp
	\label{obs: lenGreedy}
\end{observation}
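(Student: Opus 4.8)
The plan is to reduce the length of an active greedy tournament to a maximum of independent geometric random variables, mirroring the argument used for \Yuvals{} immediately after Corollary~\ref{cor: FairRunTime}. First I would exploit the indexing convention: every node taking part in greedy tournament $t$ transitions from $D'$ into $U'$ in exactly the same round $t$, so such a node resides in $U'$ precisely during the rounds $t, t+1, \dots, t + X^v(t)$, and greedy tournament $t$ is active for exactly $1 + \max_{v} X^v(t)$ rounds, the maximum ranging over the nodes $v$ participating in it. Since all these nodes simultaneously occupy state $U'$ in round $t$, they co-exist in $G_t$, so their number is at most $|V(G_t)| \le n$.

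Next I would control each individual $X^v(t)$. As recorded in the text preceding the observation, $X^v(t)$ follows $\textrm{Geom}(1/2)$ whenever $v$ is neither deleted nor incident to a newly inserted edge during the tournament. Crucially, \greedy{} has no delaying states, so a node's stay in $U'$ is governed solely by its own coin tosses together with events — winning (moving to $W$), losing (moving to $L$), deletion, or reading the letter $S$ after an edge insertion — each of which can only \emph{shorten} it. I would therefore introduce, for each participating $v$, an auxiliary variable $\tilde X^v(t) \sim \textrm{Geom}(1/2)$, namely the number of consecutive heads $v$ would toss absent any interruption, drawn from the same coin tosses, and observe that $X^v(t) \le \tilde X^v(t)$ pointwise (every self-loop of $U'$ consumes a heads). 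Because the adversary is oblivious, its schedule of topology changes is independent of the coin tosses, so the $\tilde X^v(t)$ are genuine independent $\textrm{Geom}(1/2)$ variables.

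Finally I would bound $\max_v \tilde X^v(t)$ over the at most $n$ participants. Using $\Probability[\tilde X^v(t) \ge k] = 2^{-k}$ and a union bound, $\Probability[\max_v \tilde X^v(t) \ge c \log n] \le n \cdot 2^{-c\log n}$, which is polynomially small in $n$ for a suitable constant $c$, giving the w.h.p.\ statement. For the expectation I would sum the tail $\Expectation[\max_v \tilde X^v(t)] = \sum_{k \ge 1} \Probability[\max_v \tilde X^v(t) \ge k]$ and split at $k = \log n$: the first $\log n$ terms contribute at most $\log n$, while the remaining tail $\sum_{k > \log n} n 2^{-k}$ is a convergent geometric series of value $\bigO(1)$, so $\Expectation[\max_v \tilde X^v(t)] = \bigO(\log n)$. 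Combining with the first paragraph yields the claimed $\bigO(\log n)$ bound on the number of active rounds, both in expectation and w.h.p.

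The step I expect to be the main obstacle is the stochastic-domination argument of the second paragraph: one must argue carefully that deletions and edge insertions (the only events outside the clean $\textrm{Geom}(1/2)$ regime) never \emph{prolong} a node's residence in $U'$, so that $X^v(t) \le \tilde X^v(t)$ genuinely holds, and that the obliviousness of the adversary legitimately decouples the topology changes from the coin tosses, so that the $\tilde X^v(t)$ may be treated as independent. The remaining ingredients — the co-existence count bounding the number of participants by $n$ and the standard maximum-of-geometrics tail estimate — are routine.
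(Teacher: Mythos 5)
Your proposal is correct and takes essentially the same approach as the paper: both reduce the length of an active greedy tournament to the maximum of at most $n$ independent $\textrm{Geom}(1/2)$ random variables over the co-existing participants, then apply the standard $\bigO(\log n)$ tail bound in expectation and \whp{} Your explicit stochastic-domination step (introducing $\tilde X^v(t)$ and arguing $X^v(t) \leq \tilde X^v(t)$ since deletions and edge insertions only shorten a node's stay in $U'$, with obliviousness of the adversary guaranteeing independence) is a careful treatment of a caveat that the paper's own proof silently glosses over when it asserts that $X^v(t)$ follows $\textrm{Geom}(1/2)$.
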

\begin{proof}
Let $V(t)$ be the set of nodes that participate in greedy tournament $t$. 
%
Recall that the number of state transitions in greedy tournament is exactly $X^v (t)$. 
Furthermore, $X^v (t)$ follows distribution \textrm{Geom}$(1/2)$. 
Since $|V(t)| \leq n$, all $X^w (t), w \in V(t)$ are independent and the maximum of $X^w (t)$ for at most $n$ variables is $\bigO(\log n)$ \whp and in expectation, the claim follows.
\end{proof}


\begin{IntuitionSpotlight}
In every greedy tournament $i$, either a node joins the MIS with a constant probability or a topology change occurs that affects some node.
Furthermore, any single topology change can force at most two nodes to leave state $W$.
Therefore, the sum of transitions to state $W$ by the nodes that ever execute \greedy{}, i.e., nodes that are either released or excluded, can be at most $\bigO( H + C )$.
Thus, there can be at most $\bigO( H + C )$ greedy tournaments in total.
\end{IntuitionSpotlight}

\begin{lemma}
	The total expected number of greedy tournaments is $\bigO(C \log n)$.
	\label{lemma: NumberGreedy}
\end{lemma}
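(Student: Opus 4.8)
The plan is to charge the greedy tournaments to two quantities whose expectations we already control: the number of nodes that ever run \greedy{} and the number $C$ of topology changes. First I would pin down the set $G_{\mathrm{exec}}$ of nodes that ever enter an active state of \greedy{} (state $U'$ or $D'$). By construction these are exactly the nodes that are released or excluded, so $G_{\mathrm{exec}} = H \cup X_{\mathrm{exc}}$, where $X_{\mathrm{exc}}$ is the set of excluded nodes. Lemma~\ref{lemma: participate} gives $\Expectation[|H|] \in \bigO(C \log n)$, and since each edge insertion and each node deletion can exclude only $\bigO(1)$ nodes, $|X_{\mathrm{exc}}| \in \bigO(C)$; hence $\Expectation[|G_{\mathrm{exec}}|] \in \bigO(C \log n)$.

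Next I would bound the total number $N_W$ of transitions into state $W$ performed during \greedy{} (i.e., by winning a greedy tournament). The key structural observation is that a node leaves $W$ only after reading the letter $W$ or the initial letter $S$ in one of its ports, and that---as long as the competition logic maintains the invariant that no two adjacent nodes are simultaneously in $W$---such a letter can appear only as the immediate consequence of a topology change incident to that node or one of its neighbors. Since any single topology change forces at most two nodes out of $W$, the total number of departures from $W$ is at most $2C$. Writing $e_v$ for the number of times node $v$ enters $W$ during \greedy{} and $d_v$ for the number of its departures, each entry after the first must be preceded by a departure, so $e_v \le d_v + 1$; summing over the at most $|G_{\mathrm{exec}}|$ nodes with $e_v \ge 1$ yields $N_W = \sum_v e_v \le |G_{\mathrm{exec}}| + 2C \in \bigO(C \log n)$ in expectation.

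Finally I would relate the number of greedy tournaments to $N_W$. Mirroring the geometric-random-variable analysis of Lemma~\ref{lemma: edgesRemoved}, in each greedy tournament that is not disturbed by a topology change among its participants there is, with some constant probability $p > 0$, a node whose count $X^v(t)$ strictly dominates those of its tournament-neighbors, and that node therefore produces at least one transition into $W$. At most $\bigO(C)$ tournaments overlap a topology change, since each change is local and touches only a bounded number of participating nodes. Treating the remaining tournaments as a sequence of trials with per-trial success probability at least $p$---where success means ``at least one $W$-win'' and the number of successes is at most $N_W$---an optional-stopping (submartingale) argument applied to $\sum_{j} (S_j - p)$ bounds the expected number of such tournaments by $\Expectation[N_W]/p$. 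Adding back the $\bigO(C)$ disturbed tournaments gives an expected total of $\bigO(C \log n)$ greedy tournaments.

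The main obstacle is the middle step: making rigorous the claim that every departure from $W$ is attributable to a topology change and that each change accounts for at most two departures. This requires verifying that the competition logic of \greedy{}---which, unlike \Yuvals{}, has no circular delay---never lets two adjacent nodes enter $W$ in the same round and never lets a node start competing (move from $L$ or $W$ toward $D'$) while it still has a neighbor in $W$, so that a $W$-letter or $S$-letter appearing in a port is genuinely forced by an inserted edge rather than by the algorithm's own imperfect local synchrony. The probabilistic step is comparatively routine, being essentially a re-run of the argument already used in Lemma~\ref{lemma: edgesRemoved}.
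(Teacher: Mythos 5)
Your proposal is correct and follows essentially the same route as the paper's own proof: bound the number of entries into state $W$ by charging them to released/excluded nodes plus at most $2C$ topology-change-induced departures, split greedy tournaments into clean ones (which produce a $W$-win with constant probability) and at most $\bigO(C)$ unclean ones, and convert the constant success probability into an $\bigO(C\log n)$ bound via Lemma~\ref{lemma: participate}. The only differences are cosmetic---you use an optional-stopping argument where the paper uses domination by a negative binomial distribution, and you account for excluded nodes explicitly rather than implicitly---and the ``obstacle'' you flag (that departures from $W$ are forced only by topology changes) is exactly the invariant the paper itself asserts and defers to its correctness lemma.
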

\begin{proof}
To prove the claim, we bound from above the sum of state transitions into state $W$ by the released nodes and the excluded nodes. 
A node transitions away from state $W$ at most once per added incident edge and therefore, there can be at most $2C$ transitions from $W$ to $D'$ in total.
After at most $2C + |H|$ transitions into state $W$ by the released nodes, all released nodes are either in state $W$ or deleted.


We say that a greedy tournament $t$ is \emph{unclean} if either, some node $v$ in an active state of greedy tournament $t$ is deleted or an edge is added adjacent to $v$.
Otherwise, we say that greedy tournament $t$ is \textit{clean}.
Let $V(t)$ be the nodes that participate in greedy tournament $t$, i.e., the nodes that reside in state $U'$ in round $t$ and in state $D'$ in round $t-1$. 
Assuming that greedy tournament $t$ is clean and recalling that the random variables $X^v (t)$ are independent, the probability that the maximum of $X^v (t), v \in V(t)$ is attained in a single node is at least $1/3$. 
Let $Y'_k$ be the random variable that counts the number of clean greedy tournaments we need until the sum of state transitions to $W$ by the released nodes is $k$. 
It is easy to see that $Y'_k$ is dominated by random variable $Y_k$ that obeys distribution $k + \textrm{NB(k, 2/3)}$. 


Now we set $k = 2C + |H|$ and get that $\Expectation[Y_k] \in \bigO(k) \subseteq \bigO(C + |H|)$. 
By Lemma~\ref{lemma: participate}, $|H| \in \bigO(C \log n)$ in expectation. 
Thus, we get that 
\[
	\Expectation[Y'_{k}] \in \bigO(C + |H|) \subseteq \bigO(C \log n) \ .
\] 
Finally, we observe that there can be at most $2C$ unclean greedy tournaments and thus, the total expected number of greedy tournaments is $\bigO(C \log n) + \bigO(C) = \bigO(C \log n)$.
\end{proof}
\begin{observation}
	Let $k$ be the total number of greedy tournaments. There are at most $2k + 2C$ non-silent rounds without either $i)$ at least one node in an active state of \Yuvals{} or $ii)$ an active greedy tournament.
	\label{obs: uselessNonSilent}
\end{observation}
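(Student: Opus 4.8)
The plan is to designate the rounds in the statement as \emph{useless} rounds and to charge each of them either to a greedy tournament or to a topology change, showing that no greedy tournament and no topology change receives more than two charges; summing then yields $2k + 2C$. The first step is structural: a useless round contains no node in an active state of \Yuvals{} and no node in state $U'$, so at its beginning every node occupies one of the states $W$, $L$, or $D'$. The single dynamical fact I would lean on is that a node cannot linger in $D'$ through a useless round. The transition $D' \to U'$ is enabled precisely when the node reads no $U'$ in its ports and it is delayed only by active \Yuvals{} states such as $D_1$; a useless round has neither a $U'$ in any port nor any active \Yuvals{} node, so the transition fires unless the node first drops to an output state. Hence a useless round containing a node in $D'$ is immediately followed by a round in which some node enters $U'$, i.e., by the start of a greedy tournament.

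With this in hand I would split the useless rounds according to whether they contain a $D'$ node. A useless round $r$ that contains a $D'$ node is charged to the greedy tournament that starts in round $r+1$; since distinct such rounds launch tournaments with distinct start indices, each tournament collects at most one charge of this kind. A useless round $r$ with no $D'$ node has all nodes in $\{W,L\}$, hence is an output configuration, so it can be non-silent only because a topology change occurs in it or because the configuration is incorrect. For the latter I would invoke that an all-$\{W,L\}$ configuration turns incorrect only as the result of a topology change---the only departures from the passive states are $W \to D'$ (on reading $W$ or the initial letter $S$) and $L \to D'$ (on losing every $W$-neighbor), each of which presupposes a change---so in the genuinely incorrect case the detecting nodes move to $D'$ and round $r+1$ is itself a useless round carrying a $D'$ node; I charge round $r$ to the tournament born in round $r+2$, the same tournament already charged by round $r+1$. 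This accounts for at most two charges per greedy tournament, totaling $2k$, while every useless round that is all-$\{W,L\}$ and non-silent through a genuine topology change is charged to that change.

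The main obstacle, and the source of the second factor of two, is the interference of topology changes with the implication ``a $D'$ node forces a tournament next round.'' A change in round $r+1$ may delete the $D'$ node, insert an incident edge that exposes the initial letter $S$, or remove the edge along which the node would compete, so that no tournament actually opens in round $r+1$; the preceding useless round $r$ then has no tournament to pay for it and must instead be charged to that change. A topology change can therefore be billed twice: once for rendering its own round non-silent and once for absorbing such an orphaned $D'$-round, which gives the $2C$ term. I expect the remaining work to be a routine case analysis over the four change types, using the timing convention that an inserted edge exposes the initial letter $S$ and pushes its endpoints into $D'$, verifying that these two buckets cover every useless round and that neither a tournament nor a change is ever charged a third time; this is careful bookkeeping rather than a new idea, and it delivers the bound $2k + 2C$.
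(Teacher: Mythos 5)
Your proposal takes essentially the same route as the paper's proof: the paper likewise reduces a useless round to a configuration over $\{W, L, D'\}$, argues that a $D'$ node (absent interference) enters $U'$ in the next round so that a greedy tournament becomes active, and that an incorrect all-$\{W,L\}$ configuration pushes its detecting nodes into $D'$ so that a tournament becomes active two rounds later. The paper packages the bookkeeping as ``given no topology changes, there can be at most two successive non-silent rounds without an active greedy tournament or an active \Yuvals{} node,'' whereas you make the charging explicit, with at most two charges per tournament and two per change; your treatment of the interference of changes (orphaned $D'$-rounds billed to the change) is in fact more explicit than the paper's, which closes with ``and the claim follows.''

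However, one step fails as stated, and it is the same step the paper elides. You observe that the transition $D' \to U'$ fires ``unless the node first drops to an output state,'' and then conclude that a useless round containing a $D'$ node is always followed by the start of a greedy tournament. The exception you set aside is exactly the case that escapes the charging: a node in $D'$ that reads the letter $W$ moves to $L$, not to $U'$. This is not caused by a topology change and occurs after essentially every greedy tournament that produces a winner adjacent to all other participants: in the round after the winner enters $W$ (transmitting $W$), the losers sit in $D'$, read $W$, and resolve to $L$. In that round no node is in $U'$ and none is active in \Yuvals{}, yet the configuration is a non-output configuration, so the round is non-silent and useless by your (and the paper's) definition; still, no tournament starts in the next round, which may well be silent. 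These resolution rounds receive no charge in your scheme, nor in the paper's, whose claim ``no node $v$ can be in state $D'$ in round $t$; otherwise $v$ transitions to $U'$ in round $t+1$'' is false precisely here. The repair is to add a third bucket: bill each such round to the greedy tournament that has just concluded, noting that each tournament yields at most one of them, and then argue that a tournament never also receives a charge from an incorrect all-$\{W,L\}$ detection round, because incorrectness of an all-passive configuration is created only by a topology change and is detected in the very round of that change (deleted ports and the inserted-edge letter $S$ are visible before the transition function is applied), so such rounds are already billed to changes. With that adjustment the count remains within $2k + 2C$.
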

\begin{proof}
	Consider a non-silent round $t$ and assume that in rounds $t$ and $t + 1$ no changes occur, that no greedy tournaments are active, and that no node is in an active state of \Yuvals{}. 
Notice that no node $v$ can be in state $D'$ in round $t$. 
Otherwise, $v$ transitions to state $U'$ in round $t + 1$ and thus, greedy tournament $t + 1$ is active in round $t + 1$.
Then, in graph $G_t$, either $i)$ there exists at least one pair of adjacent nodes $v, w$ that are in state $W$ or $ii)$ there is a node $v'$ that is in state $L$ and none of its neighbors are in state $W$.

In case $ii)$, node $v'$ can have only letters $S$ and $L$ in its ports in round $t$. 
Therefore, the logic of the $L$ state ensures that node $v'$ is in state $D'$ in round $t + 1$.
Now, we get that greedy tournament $t + 2$ is active in round $t + 2$.

In case $i)$, both nodes $v$ and $w$ have at least one letter $S$ or $W$ in their ports in round $t$.
Therefore, the logic of the $W$ state ensures that, both $v$ and $w$ reside in state $D'$ in round $t+1$.
Again we get that a greedy tournament has to be active in round $t + 2$. 
Thus, given no topology changes, there can be at most two successive non-silent rounds without an active greedy tournament or at least one node active in \Yuvals{} and the claim follows.
	
\end{proof}

Next we show that our MIS algorithm, that consists of \Yuvals{} and \greedy{}, fulfills the correctness properties given in the model section. 
Then, we establish our main result, i.e., that our MIS algorithm is indeed effectively confining. 
\begin{lemma}
 Our MIS algorithm is correct. 
 \label{lemma: correctness}
\end{lemma}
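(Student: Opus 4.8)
The plan is to verify the three correctness conditions (C1)--(C3) of the model section separately, concentrating the bulk of the work in a single structural invariant on the port contents that simultaneously drives (C2) and (C3).

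First I would dispose of (C1). Since the adversary performs only finitely many changes, $\mathcal{G}$ has an infinite suffix of identical graphs; on this static suffix Corollary~\ref{cor: FairRunTime} guarantees that every surviving node leaves the active states of \Yuvals{} within $\bigO(\log^2 n)$ rounds, Lemma~\ref{lemma: NumberGreedy} bounds the total number of greedy tournaments, and Observation~\ref{obs: lenGreedy} bounds the length of each. Together with Observation~\ref{obs: uselessNonSilent}, these yield a finite bound on the expected number of non-silent rounds. A non-negative random variable with finite expectation is finite w.p.\ $1$, so from any non-output configuration the execution reaches a silent (hence output) configuration w.p.\ $1$, which is exactly (C1).

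The core of the argument is the following invariant, which I would prove by induction on the rounds: at the start of every round, for any edge $\{u,w\}$ present in the current graph, the port $\phi_u(w)$ stores the letter $W$ if and only if $w$ resides in state $W$. The ``if'' direction uses that a node transmits $W$ exactly in the round it enters $W$, that the output-state self-loop $W \to W$ transmits $\varepsilon$ (so the neighbour's port is never overwritten while $w$ stays in $W$), and that an edge inserted after $w$ entered $W$ would write the initial letter $S$ into the new port and hence force $w$ out of $W$ via the detection transition $W \to D'$ before the next round boundary. The ``only if'' direction uses that every way of leaving $W$ (the transition $W \to D'$, which transmits a non-$\varepsilon$ letter, or a deletion that erases the port) overwrites or removes a stored $W$. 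The delicate points here are precisely the four-step ordering of a round --- messages, then topology changes, then transition, then transmission --- which ensures each detection fires in the same round as the event that triggers it, and the interaction between the two components, where I would invoke Observation~\ref{obs: non-affectedNotGreedy} together with the guard transitions that send the active states of both \Yuvals{} and \greedy{} to $D'$ upon reading $S$ to rule out two adjacent nodes entering $W$ in the same round across a common edge.

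Given the invariant, (C2) follows by contraposition. In an output configuration every node is in $W$ or $L$. If two adjacent nodes $u,w$ were both in $W$, the invariant gives $\phi_u(w) = W$, so the deterministic transition $W \to D'$ (on reading $W$) would have moved $u$ out of $W$ in the preceding round, contradicting that $u$ is passive; hence the $W$-nodes form an independent set. If some $L$-node $u$ had no $W$-neighbour, then by the invariant none of its ports stores $W$, enabling the deterministic transition $L \to D'$ and again contradicting that $u$ is passive; hence the set is maximal. For (C3), with $G_{r+1}=G_r$ there are no insertions (no $S$ appears), and I would use the invariant in both directions: each $W$-node reads no $W$ (by independence and the ``only if'' direction), so it stays in $W$ transmitting $\varepsilon$; each $L$-node reads a $W$ from its covering neighbour (by maximality and the ``if'' direction), so the guard $L \to D'$ is disabled and it stays in $L$. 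Thus the configuration is reproduced verbatim. I expect the main obstacle to be the bookkeeping in the invariant's induction, especially separating the cases of edge insertion/deletion and node deletion under the precise round timing, and ruling out a simultaneous \Yuvals{}/\greedy{} win across a shared edge.
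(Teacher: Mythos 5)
Your proposal is correct and takes essentially the same approach as the paper: the same split into (C1)--(C3), the same use of Corollary~\ref{cor: FairRunTime}, Lemma~\ref{lemma: NumberGreedy}, and Observations~\ref{obs: lenGreedy} and~\ref{obs: uselessNonSilent} for (C1), and the same key facts (detection transitions, no simultaneous adjacent entries into $W$) for (C2) and (C3). Your explicit port-content invariant is simply a formalized, two-directional version of the paper's assertion that a port corresponding to a neighbor in state $W$ can only contain the letters $W$ or $S$, which the paper deploys in an equivalent forward, transition-by-transition argument rather than as a separately stated induction.
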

\begin{proof}
	The correctness property $(C3)$ follows directly from the logic of the states $W$ and $L$.
	To obtain property $(C1)$, we first observe that if the nodes are in an incorrect output configuration in round $r$, there has to be either some node in state $W$ with another node in state $W$ in its neighborhood or some node along with all of its neighbors in state $L$.
	In both cases, the logic of states $W$ and $L$ ensure that the configuration, given no topology changes, is a non-output configuration in round $r + 1$.
	Given Corollary~\ref{cor: FairRunTime}, we get that eventually, no active nodes in \Yuvals{} remain once there are no more topology changes.
	Combining this with Observation~\ref{obs: uselessNonSilent} and Lemma~\ref{lemma: NumberGreedy}, we get property $(C1)$.
		
	Consider some node $v$ in an active state in round $r$.
	According to the design of our protocol, if an edge is added adjacent to $v$ in round $r$, then $v$ will not reside in state $W$ in round $r + 1$.
	We observe that due to the logic of the excluding transitions, i.e., the transitions to state $D'$ when reading the letter $S$, node $v$ that participates in a tournament of \Yuvals{} cannot have any neighbors that participate in a greedy tournament. 
	Furthermore, node $v$ can only enter state $W$ by winning a tournament (either in \Yuvals{} or in \greedy{}).
	Therefore, it cannot be the case that both $v$ and some neighbor $w$ of $v$ enter state $W$ in round $r + 1$.	
	Furthermore, node $v$ can only enter state $L$ in round $r + 1$ if it has a neighbor in state $W$ in round $r$.
	Thus, if node $v$ is in a non-output state in round $r$, it will either be in a non-output state, enter state $W$ and none of its neighbors will, or enter state $L$ and at least one of its neighbors will enter state $W$ in round $r + 1$.
	
	Consider then the case that node $v$ is in an output state in round $r$.
	Due to the logic of state $W$, node $v$ only remains in state $W$ if there are no letters $S$ or $W$ in its ports.
	Since a port that corresponds to a neighbor in state $W$ can only contain letters $W$ or $S$, $v$ cannot have a neighbor in state $W$ in round $r$.
	Due to the logic of state $L$, $v$ only remains in $L$ if there is at least one neighbor in state $W$.
	Therefore, either the configuration in round $r+1$ is not an output configuration or the configuration is a correct output configuration and we have obtained property $(C2)$.
\end{proof}

\begin{theorem}
	\label{thm: complexity}
	The global runtime of our MIS algorithm is $\bigO( (C + 1) \log^2 n )$ in expectation.
\end{theorem}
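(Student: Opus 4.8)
The plan is to decompose the global runtime into contributions from the two components of the algorithm and bound each separately, then combine them via the structural observations already established. First I would partition the non-silent rounds into three categories: (a) rounds in which at least one node resides in an active state of \Yuvals{}; (b) rounds in which some greedy tournament is active; and (c) all remaining non-silent rounds. The goal is to show each category contributes $\bigO((C+1)\log^2 n)$ in expectation, so their sum is of the same order.

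For category (a), I would invoke Corollary~\ref{cor: FairRunTime}: each node that enters \Yuvals{} reaches a passive state (or is deleted) within $\bigO(\log^2 n)$ rounds of the round in which it was inserted (or of round $1$). Since a node only (re)enters an active \Yuvals{} state upon insertion or upon reading the initial letter $S$ following an incident edge insertion, the number of distinct ``\Yuvals{} activations'' across all nodes is at most the initial population plus the number of node/edge insertions, which is $\bigO(n + C)$. However, for the \emph{global} runtime I only need to count rounds, not node-rounds; so more carefully I would argue that the set of active-\Yuvals{} rounds stemming from the initial configuration spans $\bigO(\log^2 n)$ rounds (by Theorem~\ref{thm: NrTournaments} and the $\bigO(\log n)$ per-tournament length from Observation~\ref{obs: lenGreedy}'s analogue for \Yuvals{}), and each of the $\bigO(C)$ insertion-induced reactivations contributes at most $\bigO(\log^2 n)$ rounds, for a total of $\bigO((C+1)\log^2 n)$.

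For category (b), I would combine Lemma~\ref{lemma: NumberGreedy} with the per-tournament length bound: the total number of greedy tournaments is $\bigO(C\log n)$ in expectation, and by Observation~\ref{obs: lenGreedy} each active greedy tournament spans $\bigO(\log n)$ rounds in expectation and w.h.p. The subtlety here is that greedy tournaments indexed by their start time may overlap in the rounds they occupy, so a single round can belong to many tournaments; I would therefore bound the number of \emph{distinct active rounds} by the number of tournaments times the maximum length, yielding $\bigO(C\log n)\cdot\bigO(\log n) = \bigO(C\log^2 n)$ active greedy rounds in expectation (with care taken that the product of two quantities bounded in expectation is handled correctly, e.g.\ via the w.h.p.\ length bound applied uniformly to each tournament). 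Category (c) is controlled directly by Observation~\ref{obs: uselessNonSilent}, which caps the non-silent rounds lacking both an active \Yuvals{} node and an active greedy tournament by $2k + 2C$, where $k$ is the total number of greedy tournaments; substituting $k = \bigO(C\log n)$ gives $\bigO(C\log n)$ such rounds.

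Summing the three categories yields a global runtime of $\bigO((C+1)\log^2 n) + \bigO(C\log^2 n) + \bigO(C\log n) = \bigO((C+1)\log^2 n)$ in expectation, as claimed. The main obstacle I anticipate is the bookkeeping in category (b): combining the $\bigO(C\log n)$ expected tournament count with the $\bigO(\log n)$ expected per-tournament length cannot be done by naively multiplying expectations, since the two are not independent and tournaments may overlap temporally. I would resolve this by using the high-probability form of Observation~\ref{obs: lenGreedy} to assert that \emph{every} greedy tournament has length $\bigO(\log n)$ simultaneously w.h.p.\ (via a union bound over the $\mathrm{poly}(n, C)$ tournaments, absorbing the failure probability into the expectation's lower-order term), thereby converting the product into a clean bound on distinct active rounds.
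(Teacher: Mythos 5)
Your proposal is correct and follows essentially the same route as the paper's proof: the same three-way decomposition of non-silent rounds (at least one node active in \Yuvals{}, an active greedy tournament, neither), bounded respectively via Corollary~\ref{cor: FairRunTime}, Lemma~\ref{lemma: NumberGreedy} together with Observation~\ref{obs: lenGreedy}, and Observation~\ref{obs: uselessNonSilent}, then summed by linearity of expectation. If anything, you are more careful than the paper, which silently combines the expected tournament count with the expected tournament length and records only $\bigO(\log^2 n)$ for the \Yuvals{} contribution without discussing insertion-induced activations; both of the subtleties you flag are real but harmless, and your resolutions are sound.
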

\begin{proof}
By Lemma~\ref{lemma: NumberGreedy}, we have $\bigO(C \log n)$ active greedy tournaments in expectation and by Observation~\ref{obs: lenGreedy} each greedy tournament takes $\bigO(\log n)$ rounds \whp and in expectation. 
Thus, in expectation, we have $\bigO(C \log^2 n)$ non-silent rounds where a greedy tournament is active. 
Combining this with Corollary~\ref{cor: FairRunTime} and Observation~\ref{obs: uselessNonSilent} and by linearity of expectation, the total expected runtime is $\bigO(\log^2 n) + \bigO(C \log n) + \bigO(C \log^2 n) \subseteq \bigO((C + 1) \log^2 n)$.
\end{proof}

\section{Lower Bound}
\label{sec: lower}
The runtime of our algorithm might seem rather slow at the first glance, since it is linear in the number of topology changes. 
In this section, we show that one cannot get rid of the linear dependency, i.e., there are graphs where the runtime of any algorithm grows at least linearly with the number of changes. 
In particular, we construct a graph where the runtime of any algorithm is $\Omega(C)$.

\begin{IntuitionSpotlight}
Even in a graph with two connected nodes, any algorithm has to perform some sort of symmetry breaking. Furthermore, if the degrees of node $u$ and its neighbors are small, it is likely that deleting $u$ leaves some of its neigbhors in an invalid state.

Let $G^\ell$ be a graph that consists of $n$ nodes and $n = 3\ell$. The graph consists of $\ell$ components $B_i$, where $B_i$ is a $3$-clique for each $1 \leq i \leq \ell$. In addition, let $u_1, \ldots, u_\ell$ be a set of nodes such that $u_i \in B_i$ for every $i$. For illustration, see Figure~\ref{fig: lower}.
\end{IntuitionSpotlight}

\begin{figure}
	\centering
	\resizebox{10cm}{!}{\includegraphics{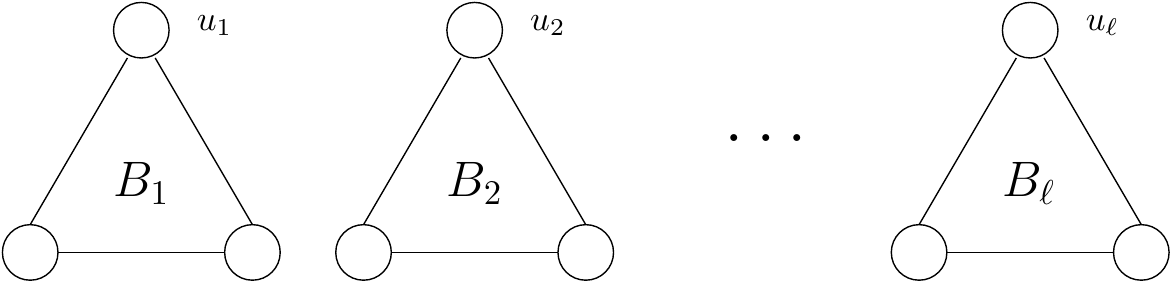}}
	\caption{One node in each of the connected components has to be in the MIS. Since the nodes in the components are indistinguishable and execute the same algorithm, the probability of $u_i$ being in the MIS for any $i$ is $1/3$.}
	\label{fig: lower}
\end{figure}

\begin{theorem}
	\label{thm: lower}
	There exists a graph $G$ and a schedule of updates such that the runtime of any algorithm on $G$ is $\Omega(C)$ in expectation and \whp
\end{theorem}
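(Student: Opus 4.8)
The plan is to use the graph $G^{\ell}$ from the intuition spotlight—the $\ell = n/3$ disjoint $3$-cliques $B_{1}, \dots, B_{\ell}$ with distinguished nodes $u_{1}, \dots, u_{\ell}$—and to let the oblivious adversary delete $u_{1}, \dots, u_{\ell}$ one at a time, at prescribed rounds $t_{1} < t_{2} < \cdots < t_{\ell}$ that are spaced far apart. Since this uses $C = \ell$ topology changes, it suffices to show that the number of non-silent rounds is $\Omega(\ell)$ in expectation and \whp.

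First I would establish the symmetry claim: because the model is anonymous (all nodes run the same program, with no identifiers) and the three nodes of each $B_{i}$ form an orbit under a graph automorphism, their states are identically distributed at every round preceding $t_{i}$. Consequently, whenever $B_{i}$ is in a correct output configuration, exactly one of its three nodes resides in state $q_{yes}$, and by symmetry each of them—in particular $u_{i}$—is that node with probability exactly $1/3$. Moreover the cliques are disconnected and the algorithm is local, so these events are mutually independent across $i$.

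Next I would argue that each deletion hitting an MIS node forces genuine recovery work. Indeed, each deletion round is already non-silent by definition, but the point of the construction is to show that the \emph{recovery} forced by the deletions—not merely the change events—costs $\Omega(C)$. If $u_{i}$ was the $q_{yes}$ node of $B_{i}$ at round $t_{i}$, then after the deletion the two surviving nodes of $B_{i}$ are both in $q_{no}$ and joined by an edge, which is not a maximal independent set; the configuration is therefore incorrect, and the two survivors are again in symmetric positions, so $\Pi$ must break symmetry between them using its random coins, costing $\Omega(1)$ non-silent rounds in expectation. Letting $I_{i}$ be the indicator that $u_{i}$ was in the MIS at round $t_{i}$, we have $\Probability[I_{i} = 1] = 1/3$ conditioned on $B_{i}$ being settled at $t_{i}$, and each $I_{i} = 1$ contributes at least one non-silent round within its own recovery window.

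The main obstacle is that the adversary is oblivious and therefore cannot time the deletions to coincide with the random moments at which the cliques settle: I must guarantee both that each $B_{i}$ is in a correct output configuration at round $t_{i}$ (so the $1/3$ argument applies) and that the recovery windows do not overlap (so the contributions land in \emph{distinct} rounds). I would resolve this by a dichotomy on the fixed algorithm $\Pi$: either $\Pi$ recovers an MIS on a constant-size instance within some bound $T = \log^{\bigO(1)} n$ rounds \whp, in which case setting every gap $t_{i+1} - t_{i}$ to exceed $T$ makes the good event (all cliques settled before their deletion, all recoveries confined to disjoint windows) hold \whp by a union bound over the $\ell = \Theta(n)$ cliques; or $\Pi$ fails to do so, in which case its runtime already exceeds any fixed polylogarithmic bound and the claim is immediate. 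Conditioned on the good event, $\sum_{i} I_{i}$ is a sum of independent $\mathrm{Bernoulli}(1/3)$ variables, so $\Expectation[\sum_{i} I_{i}] = \ell/3$ and, since $\ell = \Theta(n)$, a Chernoff bound gives $\sum_{i} I_{i} = \Omega(\ell) = \Omega(C)$ \whp; as each such term marks a distinct non-silent round, both the expectation and high-probability bounds follow.
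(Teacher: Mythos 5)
Your construction, symmetry argument, and Chernoff conclusion all match the paper's, but the dichotomy you introduce to cope with the oblivious adversary contains a genuine gap --- and it turns out to be unnecessary. The gap is in the second branch: from ``$\Pi$ fails to recover a constant-size instance within $T = \log^{\bigO(1)} n$ rounds with high probability'' you conclude that ``its runtime already exceeds any fixed polylogarithmic bound and the claim is immediate.'' This is a non-sequitur: the statement to be proved is that the runtime is $\Omega(C)$, and in your construction $C = \Theta(n)$, so exceeding a polylogarithmic bound proves nothing. Worse, slow recoveries of different cliques need not occupy distinct rounds --- non-silent rounds caused by several simultaneously unsettled components coincide rather than add up --- so you cannot convert ``each clique is slow with noticeable probability'' into $\Omega(n)$ non-silent rounds without further work. (One could argue this branch is vacuous, since a correct finite-state algorithm on a constant-size component is an absorbing finite Markov chain and hence has geometrically decaying settling-time tails, but you neither claim nor prove this.)

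The deeper point is that the obstacle you set out to resolve is illusory, and this is exactly what the paper exploits. A round is silent only if the \emph{global} configuration is a correct output configuration; hence if clique $B_i$ has \emph{not} settled by the round following the deletion of $u_i$, that round is non-silent for free. The paper therefore deletes $u_i$ in round $2i-1$ with no spacing at all, and for each $i$ considers the single event ``neither survivor of $B_i$ is in state $W$ in any round up to $2i$.'' This event has probability at least $1/3$ by the same symmetry you invoke (it contains both the case where $B_i$ settled with $u_i$ as its MIS node and the case where $B_i$ never settled), and it forces round $2i$ to be non-silent: the two survivors are then either in non-output states, or they are two adjacent $L$-nodes with no neighbor in $W$, which is not a correct MIS. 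Independence across the disjoint cliques then gives $\Expectation[X] \geq C/3$ and the high-probability bound via Chernoff, with no conditioning on a global good event and no dependence of the schedule on $\Pi$. So: keep your symmetry and concentration steps, drop the spacing and the dichotomy, and fold the ``unsettled'' case into the bad event instead of trying to rule it out.
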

\begin{proof}
	Consider graph $G^\ell$ and the following adversarial strategy. 
In round $2i - 1 \geq 1$, the adversary deletes node $u_{i}$, i.e., one of the nodes in component $B_{i}$. 
Our goal is to show that at least a constant fraction of the first $2C$ rounds are non-silent for any $C \leq n/3$. 
Consider round $2i$ and component $B_{i}$. 
Since the nodes in component $B_{i}$ form a clique, their views are identical. 
Therefore, according to any algorithm that computes an MIS, their probability to join the MIS is equal, i.e., $1/3$. 
Thus, the probability that the nodes in $B_{i} - \{u_{i}\}$ are not in the MIS in any round $2j \leq 2i$ is at least $1/3$.

Let $X_i$ then be the indicator random variable, where $X_i = 0$ if round $i$ is silent and $1$ otherwise and let $X = \sum_{i}^{\infty} X_i$ be the random variable that
counts the number of non-silent rounds. 
Since any round $1 \leq 2i \leq 2C$ is non-silent with at least probability $1/3$, we get that $\Expectation[X] \geq (1/3) C$. 
Now by applying a Chernoff bound, we get that 
\[
	\Probability\left[ X < 1/2 \Expectation[X]  \right] = \Probability\left[ X < (1/2) \cdot (1/3) C  \right] \leq  2^{-C/12}  \ .
\]
Since $C = n/3$, we get that $\Probability\left[ X < 1/2 \Expectation[X]  \right] \in \bigO(n^{-k})$ for any constant $k$ and thus, the claim follows.
\end{proof}

\section{Pseudo-Locality}
\label{sec: logDist}
In this section, we show that the runtime of node $u$ is independent of topology updates further than $1 + \log n$ distance from $u$. 
We say that an update occurs in distance $d$ from node $u$ if it involves a node within the $d$-hop neighborhood of $u$, where involving node $u$ indicates that either $u$ is inserted/deleted or an edge adjacent to $u$ is inserted/deleted.
Let $N^i_r(u)$ denote the (inclusive) $i$-hop neighborhood of node $u$ in round $r$.
Then, we denote the number of topology changes that involve the nodes in 
\[
	\bigcup_{r = 0}^{\infty} N^{1 + \log n}_r(u) = N^{1 + \log n}_\infty (u)
\]
by $C_u$.
In other words, $C_u$ counts the number of changes that involve nodes that have been within $1 + \log n$ hops from $u$ in some round.

We start by using Lemma~\ref{lemma: quality} to bound the expected number of released nodes in $N^{\log n}_\infty(u)$. 
The quality of the nodes in $N^{1 + \log n}_\infty(u)$ gives an upper bound to the number of released nodes in $N^{\log n}_\infty(u)$. 
The proof for the following lemma is analogous to the proof of Lemma~\ref{lemma: participate} and therefore omitted.

\begin{lemma}
	Let $H_u$ be the set of released nodes in $N^{\log n}_\infty(u)$. Then $\Expectation[|H_u|] \in \bigO(C_u \log n)$.
	\label{lemma: pseudoQuality}
\end{lemma}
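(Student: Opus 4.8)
The plan is to mirror the charging argument of Lemma~\ref{lemma: participate}, but restricted to the $(1+\log n)$-hop ball around $u$. First I would define $V^c_u$ to be the set of nodes lying in $N^{1+\log n}_\infty(u)$ that are eventually deleted or incident to an edge that is eventually inserted or deleted. Since every topology change counted by $C_u$ involves only a constant number of such nodes, this gives $|V^c_u| \in \bigO(C_u)$.

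Next, following the logic used for Lemma~\ref{lemma: participate}, I would show that every released node $v \in N^{\log n}_\infty(u)$ can be charged to a node of $V^c_u$ that covers it. Recall that a node is released either because it sits in state $W$ and an incident edge is added---so the node itself belongs to $V^c_u$---or because it sits in state $L$ and loses its last neighbor in state $W$. In the latter case, inspecting the transitions out of $W$ (reading a $W$ or an $S$ letter, or being deleted) shows that the $W$-neighbor $w$ which covered $v$ left state $W$ only as the result of a topology change incident to $w$; hence $w \in V^c_u$, and $v$ is one of the at most $q(w)$ nodes covered by $w$.

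The locality bookkeeping is the step that needs the most care, and I expect it to be the main obstacle. The covering node $w$ is adjacent to $v$ in the tournament in which $w$ entered state $W$, so $w$ lies within one additional hop of $u$ relative to $v$; this is exactly why the radius is inflated from $\log n$ to $1 + \log n$ in the definition of $C_u$, and it is the reason the change that ejects $w$ from state $W$ is counted by $C_u$. The delicate point is reconciling the round at which $v$ is known to be within $\log n$ of $u$ (guaranteed by $v \in N^{\log n}_\infty(u)$) with the covering round, so that the adjacency of $v$ and $w$ can actually be used to place $w$ inside $N^{1+\log n}_\infty(u)$; I would argue this using that a covered node and its coverer coexist and are adjacent during the covering tournament, together with the fact that the neighborhoods $N^{\cdot}_\infty(\cdot)$ aggregate over all rounds with a one-hop slack built into the radius.

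Finally, the charging yields $|H_u| \le \sum_{w \in V^c_u} q(w) = q(V^c_u)$, and taking expectations, invoking Lemma~\ref{lemma: quality} (so that $\Expectation[q(w)] \in \bigO(\log n)$ for every node), and using linearity of expectation gives
\[
	\Expectation[|H_u|] \le \sum_{w \in V^c_u} \Expectation[q(w)] \in |V^c_u| \cdot \bigO(\log n) \subseteq \bigO(C_u \log n) \, ,
\]
as claimed.
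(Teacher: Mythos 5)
Your proposal is correct and takes essentially the same route as the paper: the paper omits this proof as being ``analogous to the proof of Lemma~\ref{lemma: participate}'', and your argument is exactly that analogy carried out --- restricting $V^c$ to the $(1+\log n)$-hop ball, charging each released node to a coverer inside that ball, and concluding via Lemma~\ref{lemma: quality} and linearity of expectation. The one-hop slack you single out as the delicate point is precisely what the paper alludes to in the text preceding the lemma, namely that the quality of the nodes in $N^{1+\log n}_\infty(u)$ upper bounds the number of released nodes in $N^{\log n}_\infty(u)$.
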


Similarly to the proof of the global runtime, we wish to bound the number of greedy tournaments before either all released nodes in $N^{\log n}_\infty(u)$ become permanently passive or are deleted. 
We say that a node \textit{participates} in greedy tournament $t$ if it enters state $U'$ from $D'$ at time $t$. 
We say that greedy tournament $i$ is (locally) unclean, if there is a node v $\in N^{\log n}_\infty(u)$ in an active state of greedy tournament $i$ while $v$ is deleted or an edge is added adjacent to $v$.
Otherwise, greedy tournament $i$ is clean.
We show that if at least one node in $N^1_{\infty}(u)$ participates in greedy tournament $i$ and greedy tournament $i$ is (locally) clean, then with a constant probability, at least one node $v \in N^{\log n}_\infty(u)$ wins in greedy tournament $i$. 
Recall that winning indicates that $v$ enters state $W$.

\begin{lemma}
	Consider a clean greedy tournament $t$ where at least one node in $N^{1}_{\infty}(u)$ participates. Then there is some $v \in N^{\log n}_\infty (u)$ s.t. $v$ wins in greedy tournament $t$ with probability at least $1/6$.
	\label{lemma: pseudoWin}
\end{lemma}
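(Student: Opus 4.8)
The plan is to localize the unique-maximum argument behind the global bound (the proof of Lemma~\ref{lemma: NumberGreedy}) by following a monotone ascent path that is guaranteed both to end at a winner and to stay within $\bigO(\log n)$ hops of $u$. First I would record two consequences of cleanliness: for every node $v \in N^{\log n}_\infty(u)$ that is active in greedy tournament $t$, neither $v$ nor any incident edge is touched during the tournament, so $X^v(t)$ genuinely follows $\textrm{Geom}(1/2)$, these variables are mutually independent, and the relevant topology is frozen for the whole tournament. I would also restate the winning criterion in the form used throughout: a participant $v$ wins tournament $t$ exactly when it outlasts all of its \emph{participating} neighbours, i.e.\ when $X^v(t) > X^w(t)$ for every neighbour $w$ that also participates (nodes residing in $D'$, $W$, or $L$ are irrelevant to the winning condition).

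Next I would build an ascent path from the given participant $p \in N^{1}_\infty(u)$: set $v_0 = p$ and, as long as $v_j$ has a participating neighbour of strictly larger $X$-value, let $v_{j+1}$ be such a neighbour. Because the values strictly increase, the path is simple and halts at some $v_k$ with $X^{v_k}(t) \ge X^w(t)$ for every participating neighbour $w$ of $v_k$. Two facts then follow. First, $v_j \in N^{1+j}_\infty(u)$, and since $X^{v_0}(t) < \cdots < X^{v_k}(t)$ are distinct nonnegative integers we get $k \le X^{v_k}(t) \le \max_v X^v(t)$; by Observation~\ref{obs: lenGreedy} the latter is $\bigO(\log n)$ with high probability, so $v_k \in N^{\log n}_\infty(u)$ except with probability $o(1)$. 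Second, $v_k$ actually \emph{wins} the moment the inequality above is strict for all its participating neighbours, i.e.\ provided $v_k$ is not tied at the top with a neighbour.

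Finally I would show that the ascent terminates at a genuine winner, rather than stalling on a plateau of equal-valued nodes in which case no node of the plateau wins this tournament, with at least constant probability. The terminal $v_k$ fails to win only if some participating neighbour ties it at the top value; using that the $X$-values are independent $\textrm{Geom}(1/2)$ on a clean tournament, and mirroring the unique-maximum-of-geometrics estimate from the proof of Lemma~\ref{lemma: NumberGreedy} (which there yields a singleton maximiser with probability at least $1/3$), I would lower bound by a constant the probability that the plateau reached by the ascent is a singleton. The key point making this a constant is self-regulating: conditioned on $v_k$ being a terminal node, its value must be at least logarithmic in its degree, so the expected number of tied neighbours is $\bigO(1)$ regardless of the degree. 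Combining this tie-free event with the closeness statement via a union bound, $\Probability[\text{tie-free terminal} \cap v_k \in N^{\log n}_\infty(u)] \ge 1/3 - o(1) \ge 1/6$ for $n$ large enough, which proves the claim.

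The step I expect to be most delicate is precisely this reconciliation: the ascent controls distance but gives no a priori guarantee against ending on a tied plateau, while the clean global argument guarantees a (possibly distant) winner but no locality. Certifying a \emph{strict} local maximum \emph{within} the $\log n$-ball with constant probability is exactly where the independence and geometric tails furnished by cleanliness must be used with care, and where the global constant $1/3$ is diluted to the stated $1/6$.
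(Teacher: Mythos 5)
Your ascent-path strategy has a genuine gap at the step you yourself flag as delicate, and it is fatal as written: the distance control. From $k \le X^{v_k}(t) \le \max_v X^v(t)$ and Observation~\ref{obs: lenGreedy} you conclude $v_k \in N^{\log n}_\infty(u)$ ``except with probability $o(1)$'', but the maximum of up to $n$ independent $\textrm{Geom}(1/2)$ variables exceeds $\log_2 n$ with \emph{constant} probability (roughly $1 - 1/e$), and the w.h.p.\ bound of Observation~\ref{obs: lenGreedy} hides a constant factor larger than $1$. So the only thing your bound gives is that the path stays inside the $\log n$-ball with some constant probability bounded away from $1$, not $1 - o(1)$, and the concluding union bound $1/3 - o(1) \ge 1/6$ collapses. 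The tie-free step is also unsupported: the claim that ``conditioned on $v_k$ being terminal, its value must be at least logarithmic in its degree'' is false (e.g., $v_k$ is terminal with value $0$ whenever it and all its participating neighbours have value $0$); terminality biases the value upward but forces nothing, and since the path itself is built from the realized coin values, the neighbourhood of $v_k$ is heavily conditioned, so you cannot simply re-use the unique-maximum-of-geometrics estimate on it.

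The paper avoids both problems by replacing the random path with a deterministic pigeonhole over balls. Let $N^i(u,t)$ be the set of participants of tournament $t$ within $i$ hops of $u$; these sets are fixed once participation is fixed, i.e., \emph{before} the tournament's coins are tossed, so no conditioning arises. If $|N^i(u,t)| > 2|N^{i-1}(u,t)|$ held for all $i \le \log n$, then $|N^{\log n}(u,t)| > n$, a contradiction; hence some radius $i \le \log n$ satisfies $|N^i(u,t)| \le 2|N^{i-1}(u,t)|$. By cleanliness the values $\{X^w(t) : w \in N^i(u,t)\}$ are independent $\textrm{Geom}(1/2)$, so their maximum is attained at a unique node with probability at least $1/3$, and by symmetry that unique maximiser lies in $N^{i-1}(u,t)$ with probability at least $|N^{i-1}(u,t)|/|N^i(u,t)| \ge 1/2$. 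A node of $N^{i-1}(u,t)$ that is the strict maximum over $N^i(u,t)$ beats all of its participating neighbours (they all lie in $N^i(u,t)$), hence wins; multiplying gives the stated $1/6$. Note that the paper's radius guarantee is deterministic and its probabilistic ingredient is applied to a pre-determined set --- exactly the two points where your proposal breaks down.
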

\begin{proof}
	Let $N^i(u, t) \subseteq N^i_\infty(u)$ be the set of nodes (within $i$ hops from $u$) that participate in greedy tournament $t$.
To prove the claim, we first show that if $|N^i(u, t)| \leq 2|N^{i - 1}(u, t)|$ for any $i$, then there is a winning node in $N^i(u, t) \subseteq N^i_{\infty}(u)$ with probability at least $1/6$. 
Assume $|N^i(u, t)| \leq 2|N^{i - 1}(u, t)|$ and let $A(u, t)$ be the event that the maximum of $\{ X^u(t) \mid u \in N^{i}(u, t) \}$ is attained in a single node. 
Since greedy tournament $t$ is clean, the probability of $A(u, t)$ is at least $1/3$. 
As the coin tosses in $N^i(u, t)$ obey the same distribution, the maximum is attained equally likely in all the nodes. Therefore, a node from $N^{i - 1}(u, t)$ wins with probability
\[
	\frac{1}{3} \frac{|N^{i - 1}(u, t)|}{|N^i(u, t)|} \geq \frac{1}{6} \ .
\]

Assume now for contradiction that $|N^i(u, t)| > 2|N^{i - 1}(u, t)|$ for all $0 \leq i \leq \log n$. 
It follows that $|N^{\log n}(u, t)| > 2^{\log n} = n$, which is a contradiction. 
Therefore, for any clean greedy tournament $t$, there is always a node in $N^{\log n}(u, t) \subseteq N^{\log n}_{\infty}(u)$ that wins in greedy tournament $t$ with probability at least $1/6$.
\end{proof}

Now with an argument analogous to the one of Lemma~\ref{lemma: NumberGreedy}, we get that the expected number of clean greedy tournaments in which at least one node in $N^{1}_{\infty}(u)$ participates is $\bigO(C_u \log n)$. 
By applying Observation~\ref{obs: lenGreedy}, we get that the number of rounds in which any node in $N^{1}_{\infty}(u)$ participates in any greedy tournament is $\bigO(C_u \log^2 n)$.
Since there can be at most $2C_u$ unclean greedy tournaments, the number of rounds in which some node in $N^{1}_{\infty}(u)$ participates in a greedy tournament that is not clean is bounded by $C_u \cdot \bigO(\log n )$.
According to the design of the $D'$ state, there cannot be $2$ successive rounds in which $u$ is active in \greedy{} and no node in $N^{1}_{\infty}(u)$ participates in any greedy tournament or are in an active state of \Yuvals{}.
Finally, recalling that a non-affected node becomes passive in $\bigO(\log^2 n)$ rounds and by Corollary~\ref{cor: FairRunTime}, we get the following theorem.

\begin{theorem}
	The expected runtime is $\bigO((C_u + 1) \log^2 n)$ for any node $v$.
\end{theorem}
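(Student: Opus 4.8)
The plan is to decompose the local runtime of $u$ (the node around which $C_u$ is defined) into the rounds in which $u$ is active in \Yuvals{} and the rounds in which $u$ is active in \greedy{}, and to bound each separately. The first part is immediate: since $u$ is active in \Yuvals{} only between its insertion and its first entry into a passive state, Corollary~\ref{cor: FairRunTime} bounds this contribution by $\bigO(\log^2 n)$ in expectation, which accounts for the additive $+1$ in the claimed bound and already settles the case $C_u = 0$. The bulk of the work is therefore to show that the number of rounds in which $u$ resides in an active state of \greedy{} (states $U'$ or $D'$) is $\bigO(C_u \log^2 n)$ in expectation.

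First I would control the rounds in which some node of $N^1_\infty(u)$ actually participates in a greedy tournament. Re-running the counting argument of Lemma~\ref{lemma: NumberGreedy} locally, every clean greedy tournament in which a node of $N^1_\infty(u)$ participates yields, by Lemma~\ref{lemma: pseudoWin}, a node of $N^{\log n}_\infty(u)$ that wins (enters state $W$) with probability at least $1/6$; each such win either consumes one of the released nodes of $N^{\log n}_\infty(u)$ or is paid for by a topology change counted in $C_u$. Combining this with the bound $\Expectation[|H_u|] \in \bigO(C_u \log n)$ from Lemma~\ref{lemma: pseudoQuality} gives $\bigO(C_u \log n)$ such clean tournaments in expectation; since there can be at most $2 C_u$ unclean ones and each greedy tournament lasts $\bigO(\log n)$ rounds by Observation~\ref{obs: lenGreedy}, the total number of rounds in which any node of $N^1_\infty(u)$ participates in a greedy tournament is $\bigO(C_u \log^2 n)$.

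It then remains to pass from these participation rounds to the rounds in which $u$ itself is active in \greedy{}, and here the $D'$-state design does the work: mirroring Observation~\ref{obs: uselessNonSilent}, there cannot be two successive rounds in which $u$ is active in \greedy{} while no node of $N^1_\infty(u)$ either participates in a greedy tournament or resides in an active state of \Yuvals{}. Consequently, up to a factor of two, the $u$-active \greedy{} rounds are covered by the participation rounds bounded above together with the rounds in which some node of $N^1_\infty(u)$ is \Yuvals{}-active. To bound the latter I would split $N^1_\infty(u)$ into nodes present in $G_1$, whose \Yuvals{} activity is jointly confined to the first $\bigO(\log^2 n)$ rounds by Corollary~\ref{cor: FairRunTime}, and nodes that enter $N^1_\infty(u)$ only after an insertion incident to $u$; each of the latter is charged to a change counted in $C_u$ and contributes $\bigO(\log^2 n)$ active rounds, for a total of $\bigO((C_u + 1)\log^2 n)$. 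Summing the two \greedy{}-related parts with the initial \Yuvals{} contribution of $u$ by linearity of expectation yields the claimed $\bigO((C_u + 1)\log^2 n)$ bound.

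I expect the main obstacle to be the localized re-run of the Lemma~\ref{lemma: NumberGreedy} accounting: Lemma~\ref{lemma: pseudoWin} only guarantees a winner somewhere in $N^{\log n}_\infty(u)$ rather than in $N^1_\infty(u)$ where participation is measured, so the charging scheme must be set up across the entire $(\log n)$-hop neighborhood, and one must verify that the released-node potential of $N^{\log n}_\infty(u)$ (Lemma~\ref{lemma: pseudoQuality}) together with the $\bigO(C_u)$ direct transitions out of $W$ genuinely dominates the number of locally relevant wins. The bookkeeping connecting ``$N^1_\infty(u)$ participates'' to ``$N^{\log n}_\infty(u)$ wins,'' and ensuring the $D'$-design observation correctly absorbs the $D'$-waiting rounds of $u$, is where the care is needed; the remaining estimates are routine given the lemmas already proved.
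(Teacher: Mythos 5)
Your proposal is correct and follows essentially the same route as the paper's own argument: a localized re-run of the Lemma~\ref{lemma: NumberGreedy} accounting using Lemma~\ref{lemma: pseudoWin} and Lemma~\ref{lemma: pseudoQuality} to get $\bigO(C_u \log n)$ clean greedy tournaments plus $\bigO(C_u)$ unclean ones, Observation~\ref{obs: lenGreedy} for the $\bigO(\log n)$ length of each, the $D'$-state design to bridge from participation rounds of $N^1_\infty(u)$ to rounds in which $u$ itself is active in \greedy{}, and Corollary~\ref{cor: FairRunTime} for the \Yuvals{} contribution. In fact, your write-up makes explicit several steps (the charging of wins in $N^{\log n}_\infty(u)$ against released nodes and changes, and the handling of neighbors' \Yuvals{}-activity) that the paper compresses into the phrase ``with an argument analogous to the one of Lemma~\ref{lemma: NumberGreedy}.''
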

\section{Concluding remarks}
\label{section:conclusions}
As stated in \Sect{}~\ref{section:introduction}, in this paper, we take the first
step in the study of recovery from dynamic changes under the Stone Age model
and we believe that exploring this research direction further raises some
fascinating open questions:
How does the complexity of the MIS problem in dynamic graphs change under
various aspects of asynchrony (e.g., asynchronous wake-up or even fully
asynchronous message passing)?
Does there exist an efficient Stone Age MIS algorithm if the communication
model is modified so that the node receives its own transmissions (depicted by
allowing the graph to have self-loops)?
What would be the effect of augmenting the problem definition with
self-stabilization requirements?


\bibliographystyle{plain}
\bibliography{references}



\end{document}